\renewcommand*{\backref}[1]{}
\renewcommand*{\backrefalt}[4]{%
  \ifcase #1%
  \or [Page~#2.]%
  \else [Pages~#2.]%
  \fi%
}
\theoremstyle{plain}
\newtheorem{lemma}{Lemma}
\theoremstyle{definition}
\newcommand{\g}{\mathfrak{g}}
\newcommand{\fk}{\mathfrak{k}}
\newcommand{\gl}{\mathfrak{gl}}
\newcommand{\so}{\mathfrak{so}}
\newcommand{\iso}{\mathfrak{iso}}
\renewcommand{\t}{\mathfrak{t}}
\newcommand{\be}{\boldsymbol{e}}
\newcommand{\bu}{\boldsymbol{u}}
\newcommand{\x}{\boldsymbol{x}}
\newcommand{\bj}{\boldsymbol{j}}
\newcommand{\p}{\boldsymbol{p}}
\renewcommand{\k}{\boldsymbol{k}}
\newcommand{\ba}{\boldsymbol{a}}
\newcommand{\bb}{\boldsymbol{b}}
\newcommand{\bzero}{\boldsymbol{0}}
\newcommand{\eO}{\mathcal{O}}
\newcommand{\ad}{\operatorname{ad}}
\newcommand{\Ort}{\operatorname{O}}
\newcommand{\Ad}{\operatorname{Ad}}
\newcommand{\Tr}{\operatorname{Tr}}
\renewcommand{\AA}{\mathbb{A}}
\newcommand{\RR}{\mathbb{R}}
\newcommand{\Aff}{\operatorname{Aff}}
\newcommand{\GL}{\operatorname{GL}}
\newcommand{\ISO}{\operatorname{ISO}}
\newcommand{\SO}{\operatorname{SO}}
\definecolor{dkgr}{rgb}{0,0.6,0}
\definecolor{gris}{rgb}{0.5,0.5,0.5}
\newcommand{\zero}{{\color{gris}0}}
\providecommand*{\xb}{{\bm{x}}}
\renewcommand*{\xb}{{\bm{x}}}
\providecommand*{\pd}{\partial}
\renewcommand*{\pd}{\partial}
\providecommand*{\db}{{\bm{d}}}
\renewcommand*{\db}{{\bm{d}}}
\providecommand*{\pb}{{\bm{p}}}
\renewcommand*{\pb}{{\bm{p}}}
\providecommand*{\vb}{{\bm{v}}}
\renewcommand*{\vb}{{\bm{v}}}
\providecommand*{\ab}{{\bm{a}}}
\renewcommand*{\ab}{{\bm{a}}}
\providecommand*{\jb}{{\bm{j}}}
\renewcommand*{\jb}{{\bm{j}}}
\providecommand*{\spb}{{\bm{S}}}
\renewcommand*{\spb}{{\bm{S}}}
\providecommand*{\varphib}{{\bm{\varphi}}}
\renewcommand*{\varphib}{{\bm{\varphi}}}
\providecommand*{\zerob}{{\bm{0}}}
\renewcommand*{\zerob}{{\bm{0}}}
\providecommand*{\xb}{{\bm{x}}}
\renewcommand*{\xb}{{\bm{x}}}
\newcommand{\bv}{\boldsymbol{v}}
\title{\boldmath Carroll/fracton particles and their correspondence}
 \author[a,1]{José Figueroa-O'Farrill,\note{ORCID: \href{https://orcid.org/0000-0002-9308-9360}{0000-0002-9308-9360}}}
 \author[b,c,2]{Alfredo Pérez\note{ORCID: \href{https://orcid.org/0000-0003-0989-9959}{0000-0003-0989-9959}}}
 \author[a,3]{and Stefan Prohazka\note{ORCID: \href{https://orcid.org/0000-0002-3925-3983}{0000-0002-3925-3983}}}
\affiliation[a]{Maxwell Institute and School of Mathematics, The University
  of Edinburgh, James Clerk Maxwell Building, Peter Guthrie Tait Road,
  Edinburgh EH9 3FD, Scotland, United Kingdom}
\affiliation[b]{Centro de Estudios Científicos (CECs), Avenida Arturo Prat 514, Valdivia, Chile}
\affiliation[c]{Facultad de Ingeniería, Arquitectura y Diseño, Universidad San Sebastián, sede Valdivia, General Lagos 1163, Valdivia 5110693, Chile}
 \emailAdd{j.m.figueroa@ed.ac.uk}
 \emailAdd{alfredo.perez@uss.cl}
 \emailAdd{stefan.prohazka@ed.ac.uk}
 \abstract{We exploit the close relationship between the Carroll and
   fracton/dipole algebras, together with the method of coadjoint orbits, to
   define and classify classical Carroll and fracton particles.  This
   approach establishes a Carroll/fracton correspondence and provides
   an answer to the question ``What is a fracton?''.

   Under this correspondence, carrollian energy and center-of-mass
   correspond to the fracton electric charge and dipole moment,
   respectively. Then immobile massive Carroll particles correspond to the
   fracton monopoles, whereas certain mobile Carroll particles
   (``centrons'') correspond to fracton elementary dipoles. We uncover
   various new massless carrollian/neutral fractonic particles,
   provide an action in each case and relate them via a
   $GL(2,\mathbb{R})$ symmetry.

   We also comment on the limit from Poincaré particles, the relation
   to (electric and magnetic) Carroll field theories, contrast Carroll
   boosts with dipole transformations and highlight a generalisation
   to curved space ((A)dS Carroll).}
\begin{document}
\maketitle

\section{Introduction}
\label{sec:introduction}

Carrollian~\cite{MR0192900,SenGupta1966OnAA} and
fractonic~\cite{Chamon:2004lew,Haah:2011drr,Vijay:2015mka,Vijay:2016phm}
theories both put mobility restrictions on their elementary particles
and play a prominent rôle in exciting recent advances in high energy
and condensed matter physics. But this unconventional feature and its
connected exotic symmetries also mean that many of the properties we
take for granted for Lorentz-invariant theories now need to be
reconsidered. One of them is the very definition of an elementary
system or particle with a specific symmetry.

In this work we define, classify, and analyse classical carrollions
(i.e., Carroll particles) and fractons, which for the purposes of this
paper we take to mean particles with conserved electric charge and
dipole moment. We propose that fractons are 
elementary systems with fracton symmetry; in other words, homogeneous
symplectic manifolds of the dipole group~\cite{Gromov:2018nbv} which,
since this group has vanishing symplectic cohomology, are nothing but
its coadjoint orbits. This provides, at least at a classical level,
one answer to the question \emph{``What are fractons?''}. Indeed, we
show that our proposed fractons indeed satisfy the expected
properties. This definition follows Souriau~\cite{MR1461545}, which
may be interpreted as a classical analogue of Wigner's classification
of Poincaré particles~\cite{Wigner:1939cj}.

Most of our discussion applies equally to both carrollions and
fractons. The close relation between their underlying
symmetries~\cite{Bidussi:2021nmp} (see also~\cite{Marsot:2022imf})
reveals an interesting correspondence between them. More precisely, they both
possess conserved angular momentum $\jb$ and linear momentum $\p$, but
Carroll energy $E$ and center-of-mass charge $\k$ can be reinterpreted
as fracton electric charge $q$ and dipole moment $\db$, respectively.
At the heart of our discussion are the following commutation relations
\begin{align}
 [\k,\p] &= E\quad  (\text{Carroll})  &\Leftrightarrow&&   [\db,\p] &= q \quad  (\text{Fracton})
\end{align}
from which many of the unusual properties emerge.  We should remark
that the symmetries do not precisely match: fracton energy has no counterpart in the
Carroll world, but since it is central in the dipole algebra
it does not affect the classification of coadjoint orbits and hence
may be safely ignored in most of our discussion.

Classical Carroll/fracton particles (i.e., the coadjoint orbits of the
Carroll/dipole group) fall broadly into two classes,
depending on whether or not the carrollian energy (dually, the fracton
charge) vanishes.  If nonzero, we call these particles \emph{massive
  carrollions} for reasons we will explain in the bulk of the paper
and, similarly, if the carrollian energy is zero, we call the
corresponding particles \emph{massless}. Carroll/fracton correspondence
relates massive carrollions to charged monopoles, both of which share
the characteristic feature of being stuck to a point: (see
Figure~\ref{fig:fracton_mobmom_lim})
\begin{align}
  \dot \x = \zerob.
\end{align}
Whereas for carrollions this feature is due to the conservation of
center-of-mass $\k=E\x$, for monopoles it is due to the conservation
of the dipole moment $\db=q\x$. This conclusion can be systematically derived from a
phase space action associated with the relevant coadjoint orbit, which
for the massive spinless carrollion is given by
\begin{equation}
  S_{\text{massive}}=\int d\tau 
  \left[
-E\dot{t}+\boldsymbol{\pi}\cdot\dot{\boldsymbol{x}}-N\left(E-E_{0}\right)
  \right]
\end{equation}
and for the spinless fracton monopole by
\begin{equation}
  S_{\text{monopole}}=\int d\tau 
  \left[
-E\dot{t}+\boldsymbol{\pi}\cdot\dot{\boldsymbol{x}}+\Phi\dot{q}-N\left(E-E_{0}\right)-\eta\left(q-e\right) 
  \right] \, .
\end{equation}
In these particle actions, all quantities are varied except for the
fixed values $E_0$ for the energy and $e$ for the charge.  We also
see that once we fix the time $\tau=t$ and solve the constraints we
arrive at equivalent actions on the reduced phase space
\begin{align}
  S_{\text{red}} = \int dt 
  \left[
 \boldsymbol{\pi}\cdot\dot{\boldsymbol{x}}-E_{0} 
  \right] \, .
\end{align}
In this sense they are intrinsically the same.

\begin{figure}[h!]
\centering
  \begin{tikzpicture}[scale=0.4]
          \draw[] (0,0) circle (2cm and 0.5cm); 
          \draw[] (2,0) -- (2,8) arc (360:180:2cm and 0.5cm) -- (-2,0) arc (180:360:2cm and 0.5cm); 
          \draw[] (0,8) circle (2cm and 0.5cm);

            \draw[color=green,thick] plot [smooth ,tension=2] coordinates {(-1,0) (-1,8)};
            \filldraw[color=green, fill=blue!5, very thick](-1,0) circle (0.2);
            \filldraw[color=green, fill=blue!5, very thick](-1,4) circle (0.2);
            \filldraw[color=green, fill=blue!5, very thick](-1,8) circle (0.2);

             \draw[color=gray!60,thick] plot [smooth ,tension=1] coordinates {(1,0) (1.3,2) (1,4) (0.6,6) (1,8)};
            
            \draw[color=gray!60,->,thick] (1,0) -- (1+0.7,0-0.5);
            \draw[color=gray!60,->,thick] (1,4) -- (1+0.7,4-0.5);
            \draw[color=gray!60,->,thick] (1,8) -- (1+0.7,8-0.5);

            \node at (1.7,4.4) {{\color{gray}$\db$ }};
            \draw[-Stealth] (4,0) -- (4,8);
            \node[right] at (4,8) {$t$};

          \end{tikzpicture}
          \caption{This figure is a sketch of the movement of fractons
            (and carrollions, dually) in space. When the dipole moment
            is conserved, monopoles are restricted to a point in space
            as pictured by the straight left line.  On the other hand,
            the mobility of elementary dipoles is not restricted:
            dipole conservation $\dot \db =\zerob$ implies the
            dipole vector is inert (right).  Dually, massive
            carrollions are stuck at a point since the center-of-mass
            charge is conserved as can be visualised by thinking of a
            massive Poincaré particle for which the light cone closes
            to a line. For the massless carrollian centrons again
            mobility is not restricted.}
    \label{fig:fracton_mobmom_lim}
\end{figure}
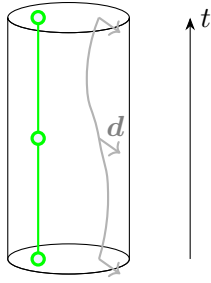

On the other hand, certain massless carrollions (which we tentatively
call \emph{centrons}) having vanishing energy and momentum but
nonzero center-of-mass charge correspond to elementary fractonic
dipoles.  Their mobility, as expected, is not restricted as we can see
from their phase space action
\begin{equation}
  \label{eq:S_dipolewithx}
  S_{\mathrm{dipole}}= \int dt 
  \left[
    \db\cdot\dot{\vb} + \bm{\pi} \cdot \dot \x -\eta\left(\|\db\|^{2}-d^{2}\right)  -\bm{u} \cdot \bm{\pi}
  \right] \, ,
\end{equation}
where only $d$ is not varied (the centron action is given by
$\db \mapsto \k$). In this action $\xb$ represents the position of the
dipole and $\db$ is a vector that represents the dipole moment,
anchored at $\xb$. They are independent degrees of freedom in the
action. Variation with respect to $\bm{\pi}$ shows that
$\dot \x = \bm{u}$, but since $\bm{u}$ is an arbitrary Lagrange
multiplier, $\xb$ is unconstrained, i.e., without coupling to external
sources their time evolution is undetermined and not fixed by the
symmetries.  On the other hand varying with respect to $\vb$ leads to
$\dot \db = \zerob$, which shows that the dipole moment is indeed
conserved. The constraint $\|\db\|^{2} = d^{2}$ fixes the norm of the
dipole momentum, where $d$ is an external parameter characterising the
magnitude of the dipole moment. This leaves us with two remaining
degrees of freedom, which are the two angles that determine the
direction of the dipole moment vector with norm $d$.
  
We may also add spin for the massive carrollion/monopole and in this
way also uncover various other massless carrollions/neutral fractons,
which have not appeared in the literature, and show that the zoo of
carrollions and fractons is more diverse than discussed so far (see
Table~\ref{tab:carrollions}). The good news is that notwithstanding
these particles being physically distinct, they are related by
$GL(2,\RR)$ transformations arising as outer automorphisms of the
Carroll/dipole group.  This implies that at a technical level we may
often reuse results for one type of particle and apply them to another
type, simply by acting with a $GL(2,\RR)$ transformation.

We also comment on the connection to Poincaré particles (see in
particular Figure~\ref{fig:mom_lim}), the relation to (electric and
magnetic) carrollian field theories and provide some remarks on
Carroll boost versus dipole symmetry for field theories. We also
highlight that there is a curved generalisation to (A)dS Carroll and
fractons on curved space.

The rest of the paper is organised as follows.

We start in Section~\ref{sec:carr-part-3+1} by briefly reviewing
Carroll symmetries and describing the classification of carrollions,
which is summarised in Table~\ref{tab:carrollions}. A more
comprehensive discussion of Carroll/dipole symmetry (in arbitrary
dimension) can be found in Appendix~\ref{app:carroll-symmetry},
whereas Appendix~\ref{sec:coadjoint-orbits-n=3} contains more details
about the classification of the coadjoint orbits (in three spatial
dimensions), including their structure as manifolds.

In Section~\ref{sec:particle-actions} we provide a systematic
derivation of particle actions for carrollions.

In Section~\ref{sec:carr-vs-massl} we discuss the limit from Poincaré
particles to carrollions and highlight that for to each spinning
massive Poincaré particle there corresponds a massive spinning
carrollion. The massless carrollions mostly derive from Poincaré
tachyons and the massless Poincaré particles seem to vanish in the
limit.

In Section~\ref{sec:carr-vers-fract} we introduce dipole
symmetries and the Carroll/fracton correspondence. We then discuss the
mobility restrictions of the monopoles and dipoles. As an instructive
example we also show how the elementary dipoles emerge from two
monopoles.

In Section~\ref{sec:field-theor-gener} we comment on the
generalisation to field theories and curved space. In
Section~\ref{sec:electr-magn-carr} we discuss the relation of massive
and massless carrollions to Carroll field theories. Massive Carroll
particles are related to what is sometimes called ``electric'' theory.
Massless carrollions with helicity seem to be related to ``magnetic''
theories, but the known actions have an additional source term, for
which we propose a possible resolution. In
Section~\ref{sec:dipole-vers-carr} we contrast Carroll boost symmetry
with dipole symmetries. In Section~\ref{sec:ads-carroll-fractons} we
describe how the Carroll/fracton correspondence can be generalised to curved
space, more precisely, (A)dS Carroll.

In Section~\ref{sec:discussion-outlook} we comment on various
potential applications of our results such as to time-like symmetries,
other exotic particles, flat holography and black holes.

\section{Classical Carroll particles: Coadjoint orbits}
\label{sec:carr-part-3+1}

In this Section we summarise the classification of classical
carrollions (i.e., Carroll particles) in $3+1$ spacetime dimensions.
In other words we classify coadjoint orbits of the ($3+1$)-dimensional
Carroll group. This summary should be read together with
Table~\ref{tab:carrollions}.  Readers interested in the
mathematical details are encouraged to read the two appendices.  In
Appendix~\ref{app:carroll-symmetry}, whose point of departure is the
Carroll Lie algebra in general dimension, we show how to view the
Carroll group as a matrix group, allowing us to determine the adjoint
and coadjoint actions explicitly.  In
Appendix~\ref{sec:coadjoint-orbits-n=3} we go through the systematic
classification  of coadjoint orbits as well as providing more details
about their structure.

We may give Carroll spacetime global coordinates $(t,\x)$ on which the
Carroll group $G$~\cite{MR0192900,SenGupta1966OnAA} acts via the
following three kinds of Carroll transformations
\begin{itemize}
\item \textbf{rotations}: $(t,\x) \mapsto (t, R\x)$, where $R \in
  \SO(3)$;
\item \textbf{(carrollian) boosts}: $(t,\x)\mapsto (t +
  \bv\cdot\x,\x)$;
\item and \textbf{translations}: $(t,\x) \mapsto (t + s, \x + \ba)$.
\end{itemize}
The distinguishing feature of Carroll symmetries are the carrollian
boosts which only act on time, rather than time and space, like
Poincaré, or only space, like for Galilei. The general Carroll
transformation $(R,\bv,\ba,s)$ is a composition of rotations, boosts
and translations and is given by (see Appendix~\ref{sec:symm-brok-param}):
\begin{equation}
  \label{eq:carroll-trans}
  (t,\x) \mapsto (t + s + \bv \cdot \x , R\x + \ba).
\end{equation}

In this section we will restrict to the ($3+1$)-dimensional Carroll
group, whose Lie algebra $\g$ is spanned by $J_i, B_i, P_i, H$ with
nonzero Lie brackets
\begin{align}
  \label{eq:3-carroll-algebra}
  [J_i, J_j] &= \epsilon_{ijk} J_k &
  [J_i, B_j] &= \epsilon_{ijk} B_k &
  [J_i, P_j] &= \epsilon_{ijk} P_k &
  [B_i,P_j] &=\delta_{ij} H,
\end{align}
where the Levi-Civita symbol $\epsilon_{ijk}$ is normalised so that
$\epsilon_{123}=1$. This defines the adjoint action of $\g$ on itself
and by exponentiation also the adjoint representation which for a
group element $g$ acts as $\Ad_{g}A = gAg^{-1}$ on $A \in \g$.
From that we define the coadjoint representation: if $\alpha \in \g^*$
is a element in the dual of the Lie algebra, then
\begin{align}
  \langle \Ad^{*}_{g}\alpha, A\rangle =   \langle \alpha, \Ad_{g^{-1}}A\rangle
\end{align}
Covectors $\alpha \in \g^*$ may be interpreted as the conserved
quantities of the elementary systems: $\alpha = (\bj,\k,\p,E)$, where
$\bj=\langle{\alpha,\boldsymbol{J}}\rangle$ is the angular momentum,
$\k = \langle{\alpha,\boldsymbol{B}}\rangle$ the centre of mass,
$\p = \langle{\alpha,\boldsymbol{P}}\rangle$ the (linear) momentum and
$E= \langle\alpha, H\rangle$ the energy. The coadjoint action of the
Carroll group element $g = (R,\bv,\ba,s)$ on $\alpha = (\bj,\k,\p,E)$
is given by $\Ad^*_g \alpha = (\bj', \k', \p', E')$
where~\cite{Duval:2014uoa}
\begin{equation}
  \label{eq:coadjoint-rep-n=3-summary}
  \begin{split}
    \bj' &= R \bj + \bv \times R\k + \ab \times R\p + E\bv \times \ab\\
    \k' &= R\k + E \ab\\
    \p' &= R\p - E\bv\\
    E' &= E.
  \end{split}
\end{equation}
The intuition is that the coadjoint action characterises the
transformation of the conserved quantities under the action of the
Carroll group. Each coadjoint orbit defines a specific
``particle''.  Because coadjoint orbits are elementary systems which
often have the interpretation of a particle, we use the two
synonymously in this work.  Of course, not every coadjoint orbit
admits such an interpretation, e.g., the vacuum.

There are two obvious Casimirs of the Carroll group \cite{MR0192900}:
$H$ and $W^2$, which define a linear and a quartic function on $\g^*$,
respectively, evaluating on $\alpha = (\bj,\k,\p,E)$ to
\begin{equation}
  \label{eq:carr-inv}
  H(\alpha) = E \qquad\text{and}\qquad W^2(\alpha) = \|E \bj + \p
  \times \k\|^2.
\end{equation}
It is clear that $E$ is invariant under the coadjoint representation
and it is a short calculation using
equation~\eqref{eq:coadjoint-rep-n=3-summary} to see that $E\bj + \p
\times \k$ transforms by a rotation, so its norm is invariant.
Therefore these functions are constant on coadjoint orbits and
therefore they are useful in their classification.

The energy $E$ is analogous to the mass of Poincaré coadjoint orbits
and we will therefore tentatively follow the convention to call the
orbits \emph{massive} when $E \neq 0$ and \emph{massless} when $E=0$.

\subsection{Massive carrollions ($E \neq 0$)}
\label{sec:e-neq-0}

As shown in Appendix~\ref{sec:coadj-orbits-with-nonzero-energy}, any
$\alpha \in \g^*$ with $E \neq 0$ can be boosted to the ``rest frame''
so that $\p = \bzero$, in analogy with the massive Galilei and
Poincaré particles. In addition, we may translate so that the centre
of mass $\k=\bzero$. Doing so brings $\alpha$ to the form
$(\spb,\bzero,\bzero,E)$, where we have introduced the intrinsic spin
vector
\begin{align}
  \label{eq:spin-def}
 \spb := \bj +  E^{-1} \p \times \k,
\end{align}
whose norm $S = \|\spb\|$ is the intrinsic spin of the particle. We
can use this quantity to differentiate between two kinds of massive
particles: \textbf{spinless massive carrollions} where $S=0$ and
\textbf{massive carrollions with spin $S>0$}.

Notice that we have nonzero angular momentum, even though $\p$
vanishes, so it is justified to call $S$ the spin.   This is in
contrast to the orbital angular momentum that even spinless
carrollions have: $\bj = E^{-1} \k \times \p$.

In energy-momentum space, the condition $E=E_0\neq 0$ fixes a
three-dimensional affine hyperplane (see Figure~\ref{fig:mom_lim})
over which the coadjoint orbit fibres. The coadjoint orbits for
spinless massive particles are then given by the cotangent bundle of
these affine hyperplanes. For the massive particles with nonzero spin,
the coadjoint orbits acquire two extra dimensions: to every point in
the phase space of the spinless particle there is associated a
$2$-sphere of radius the intrinsic spin. This is explained in
Appendix~\ref{sec:struct-coadj-orbits}. 

In summary, massive coadjoint orbits are in one-to-correspondence with
pairs $(E,S)$, where the energy $E \neq 0$ and the spin $S \geq 0$.

\subsection{Massless orbits ($E = 0$)}
\label{sec:e-eq-0}

When $E=0$ the coadjoint action reduces to
\begin{equation}
  \label{eq:coadjoint-rep-n=3-Ezero}
  \begin{split}
    \bj' &= R \bj + \bv \times R\k + \ab \times R\p\\
    \k' &= R\k\\
    \p' &= R\p \\
  \end{split}
\end{equation}
and the Casimir $W^2$ is still an invariant and on $\alpha =(\bj,
\k,\p,0)$ it now takes the value
\begin{align}
  W^2(\alpha) = \| \p \times \k\|^2 = \|\p\|^2 \|\k\|^2  - (\p \cdot \k)^2,
\end{align}
where we have used a standard vector identity. When $E=0$, it is clear
from equation~\eqref{eq:coadjoint-rep-n=3-Ezero} that $\|\p\|^2$,
$\|\k\|^2$ and $\p \cdot \k$ are separately
invariant~\cite{Duval:2014uoa}, which we may use to further refine the
classification of massless orbits. Before doing so, however, we may
already highlight two interesting physical consequences:
\begin{itemize}
\item Since $\|\p\|^2$ is invariant, massless orbits cannot be boosted
  to a ``rest frame'' where $\p = \bzero$. This similarity to the
  massless Poincaré particles is a further justification to call these
  orbits massless (this is however also a property of tachyons).
\item Another peculiar feature of the massless coadjoint orbits is the
  existence of the invariant $\|\k\|^2$. This differs from the
  Poincaré and Galilei (more precisely, Bargmann) case where we can
  always translate massive and massless particles in such a way that
  the centre of mass $\k$ vanishes.
\end{itemize}

We now return to the classification of massless orbits.  We will first
focus on the orbits where $\p \times \k =\bzero$, so that the linear
momentum $\p$ and the centre of mass $\k$ are parallel.  

\subsubsection{\texorpdfstring{Vacuum sector ($\p=\k=\bzero$)}{Vacuum sector (p=k=0)}}
\label{sec:p=k=0:-vacuum-sector}

The vacuum sector is given by restricting to $E=0$ and by additionally
setting $\p=\k=\bzero$. The remaining nontrivial coadjoint action is a
rotation of the angular momentum $\bj' = R \bj$. This gives rise to
the quadratic invariant $\| \bj \|^2$. The orbit with $\bj=\bzero$
consists of the origin in $\g^*$ and we call it the \textbf{vacuum}.
The orbits with $\bj \neq \bzero$ are $2$-spheres of radius $\|\bj\|$
and we call the corresponding particles \textbf{spinning vacua}.

In summary, these orbits are parametrised by $j = \|\bj\| \geq 0$ and
they may be uniquely specified by the equations $E=0$,
$\p=\bzero$, $\k = \bzero$ and $\|\bj\| = j \geq 0$.

\subsubsection{Massless (parallel) carrollions}
\label{sec:massless-parallel}

These are the orbits where $\p \times \k = \bzero$ but not both $\p$
and $\k$ are zero.  These orbits are all related by the action of
automorphisms, as explained in Appendix~\ref{sec:orbits-mod-autos}.
The action of automorphisms on momenta is described by
equation~\eqref{eq:autos-momenta}, where we see that automorphisms act
on $\p,\k$ via general linear transformations which act transitively
on lines.  Nevertheless, the orbits are described by different
equations and have different physical interpretations, so it is worth
looking at them separately.

Let us first set $\k = \bzero$, but $\p \neq \bzero$.  The coadjoint
action reduces to
\begin{equation}
  \label{eq:coadjoint-rep-n=3-kzero}
  \begin{split}
    \bj' &= R \bj +  \ab \times R\p\\
    \p' &= R\p \, .
  \end{split}
\end{equation}
We see that not just is $\|\p\|$ invariant, but also $\bj \cdot
\p$.  We define the \emph{helicity} $h$ by
\begin{align}
  \label{eq:helicity-def}
  \bj \cdot \p = h\|\p\| \, .
\end{align}
The helicity can be any real number. As shown in
Appendix~\ref{sec:p-neq-zero}, we may always bring such a covector to
the form $(\bj,\bzero,\p,0)$, where $\bj = h \p/\|\p\|$, where $\p$
can be rotated into any desired direction. For example, we can take
$\p = (0,0,p)$ and hence $\bj = (0,0,h)$. While we will call these
particles~\textbf{massless carrollions with helicity $h$} it would be
equally justified to call them \textbf{aristotelions} which emerge
from the (flat) aristotelian space (see,
e.g.,~\cite{Figueroa-OFarrill:2018ilb}) and which have no boost
symmetries and thus no center-of-mass conservation and precisely the
coadjoint action~\eqref{eq:coadjoint-rep-n=3-kzero} with $E'=E$.

These orbits are cut out by the equations $E=0$, $\k = \bzero$,
$\|\p\|=p\neq 0$ and $\bj \cdot \p = h p$, depending on the two
parameters $p>0$ and $h \in \RR$.

Everything we just discussed applies mutatis mutandis to the case
where $\p=\bzero$ and $\k \neq \bzero$. We may always bring a covector
in such an orbit to the form $(\bj, \k, \bzero,0)$, where
$\bj = h \k/\|\k\|$ and $h \in \RR$. Such orbits are characterised by
the equations $E=0$, $\p = \bzero$, $\|\k\|=k \neq 0$ and
$\bj \cdot \k = h k$. They characterise the center of mass and we
therefore tentatively call them \textbf{centrons}.

Finally, we discuss the case of $\p \times \k = \bzero$, but neither
$\p$ nor $\k$ are zero.  This breaks up into two cases depending on
whether $\p$ and $\k$ are parallel or antiparallel.  In this case,
since $E=0$, the inner products $\bj \cdot \p$ and $\bj \cdot \k$ are
constant on the orbit.  Also $\p \cdot \k$ is constant, but since
$\p \times \k = \bzero$, it follows that
$\p \cdot \k = \pm \|\p\|\|\k\|$, where the plus sign says the angle
between them is $0$ (parallel) and the minus sign says the angle
between them is $\pi$ (antiparallel).  It is also the case that
$\bj\cdot \p$ and $\bj \cdot \k$ are not independent, so that if we
know $\|\p\|$, $\|\k\|$, $\p \cdot \k$ and one of $\bj\cdot\p$ or
$\bj \cdot \k$, we know them all.  We may impose these conditions in
order.  We start with the 10-dimensional $\g^*$ and impose $E=0$ to
drop down to a $9$-dimensional hyperplane with coordinates
$(\bj,\k,\p)$.  We impose $\|\p\|=p>0$ and $\|\k\|=k>0$ and we get a
$7$-dimensional manifold diffeomorphic to
$\RR^3 \times S^2 \times S^2$.  Next we impose
$\p \times \k = \bzero$.  This results in two disconnected
$5$-dimensional submanifolds $\RR^3 \times M^2_\pm$ one for each sign
in $\p\cdot\k = \pm p k$. The two-dimensional manifolds $M^2_\pm$ are
submanifolds of $S^2 \times S^2$ and consist of points
$(\pm k \bu, p \bu)$, where $\bu$ is a unit vector in the direction of
$\p$.  Notice that both $M^2_\pm$ are diffeomorphic to $S^2$, so that
each of the five-dimensional manifolds is diffeomorphic to
$\RR^3 \times S^2$.  Finally, we impose $\bj \cdot \p = h p$, which
cuts the dimension by one, resulting in a $4$-dimensional orbit.  In
summary, the condition $\p \times \k = \bzero$ does not impose three
relations as one might naively suspect, but only two sets of two
relations, distinguished by a sign.

\subsubsection{Generic massless carrollions}
\label{sec:generic}

It remains to discuss the orbits where $E=0$ but
$\p \times \k \neq \bzero$, which is the generic case and motivated us
to call them~\textbf{generic massless carrollions}. This implies that
$\|\p\|=p>0$ and $\|\k\|=k>0$ and hence $\p \cdot \k = p k \cos\theta$
for some angle $\theta \in (0,\pi)$. Let us write $\p = p \bu$, for
$\bu$ a unit vector in the direction of $\p$. Then
$\k - k\cos\theta\bu$ is perpendicular to $\p$ and hence to $\bu$. Let
us write it as $\k - k\cos\theta\bu = k \sin\theta \bu_{\perp}$, where
$\bu_{\perp}$ is a second unit vector that is perpendicular to $\bu$.
Therefore we may write
$\k = k (\cos\theta\bu + \sin\theta\bu_{\perp})$.

Since $\p \times \k \neq \bzero$ we may use
boosts and translations to set $\bj = \bzero$. This should be
contrasted with the orbits of Sections~\ref{sec:p=k=0:-vacuum-sector}
and \ref{sec:massless-parallel} where this is not possible when
$h\neq 0$. Related to this observation is the absence of a notion of
spin for the generic massless orbits.

As expected for $\theta = 0$, where $\p$ and $\k$ are parallel and
$\theta = \pi$, where they are antiparallel, we are back to the
earlier case of Section~\ref{sec:massless-parallel}, except that the
orbit drops dimension from $6$ to $4$.

\begin{table}
  \centering
  \caption{Overview of carrollions}
  \setlength{\extrarowheight}{3pt}
  \resizebox{\linewidth}{!}{
    \begin{tabular}{>{$}l<{$}l>{$}l<{$}>{$}c<{$}>{$}l<{$}}
      \toprule
      \multicolumn{1}{l}{\#} & \multicolumn{1}{c}{Particle description} & \multicolumn{1}{c}{Orbit representative} & \dim\mathcal{O}_\alpha & \multicolumn{1}{c}{Equations for orbits}\\
                             & & \multicolumn{1}{c}{$\alpha=(\bj, \k, \p,E)$} & & \\ \midrule \rowcolor{blue!7}
      1  & Massive spinless & (\bzero, \bzero, \bzero,E_0) & 6 & E =E_0 \neq 0, E_0 \bj + \p \times \k = \bzero\\ 
      2  & Massive with spin $S$ & (S\bu, \bzero,  \bzero , E_0)  & 8 & E =E_0\neq 0,\|\bj+E_0^{-1}\p \times\k\| = S >0 \\ \midrule \rowcolor{blue!7}
      3  & Vacuum & (\bzero,\bzero,\bzero,0)  & 0  & E=0,\p=\bzero,\k=\bzero,\bj=\bzero\\     
      4  & Spinning vacuum  & (j\bu,\bzero,\bzero,0) & 2 & E=0,\p=\bzero,\k=\bzero,\|\bj\|=j>0 \\ \cmidrule{1-2} \rowcolor{blue!7}
      5  & Centrons & ( h \bu, k \bu, \bzero, 0) & 4 & E=0,\p=\bzero,\|\k\|=k>0, \bj\cdot\k = h\|\k\| \in \RR \\
      6  & Massless with helicity $h$ & (h \bu, \bzero, p \bu, 0) & 4 & E=0,\k=\bzero,\|\p\|=p>0, \bj \cdot \p = h \|\p\|\in \RR \\     \rowcolor{blue!7}
      7_+ & Parallel massless & (h\bu, k\bu, p\bu, 0) & 4 & E=0,\|\p\|=p>0, \|\k\|=k>0, \p\cdot\k = p k, \bj \cdot \p = h \|\p\|\in \RR \\
      7_- & Antiparallel massless & (h\bu, -k\bu, p\bu, 0) & 4 & E=0,\|\p\|=p>0, \|\k\|=k>0, \p\cdot\k = -p k, \bj \cdot \p = h \|\p\|\in \RR \\
      \rowcolor{blue!7}      
      8  & Generic massless &  (\bzero, k \cos\theta \bu + k\sin\theta \bu_\perp, p \bu,0)  & 6 & E=0,\|\p\|=p>0, \|\k\|=k>0, \p\cdot\k = pk\cos\theta, \theta \in  (0,\pi) \\
      \bottomrule
    \end{tabular}
  }
  \caption*{This table provides an overview of the carrollions (=
    Carroll particles) which are summarised in
    Section~\ref{sec:carr-part-3+1}. As indicated by the horizontal
    line they are roughly separated into massive ($E\neq 0$) and
    massless ($E=0$) orbits. The shorter horizontal line separates the
    vacuum sector from the massless particles. The second column
    provides a tentative descriptive name (if one exists). The third
    column displays an orbit representative: the notation is such that
    $\bu \in \RR^3$ represents a fixed unit-norm vector and in the
    last row $\bu_{\perp}\in \RR^3$ is a second unit-norm vector
    perpendicular to $\bu$. The last column provides the equations
    which define the orbits. One can easily check that
    $10 - \#\text{equations} = \dim\mathcal{O}_{\alpha}$ in all cases
    but the (anti)parallel massless, which might seem to be
    under-constrained but as discussed in the text they are not.}
  \label{tab:carrollions}
\end{table}

\section{Particle actions}
\label{sec:particle-actions}

Given a coadjoint orbit there is a systematic to way to associate with
it a particle action. This is fundamental to the symplectic approach
to dynamical systems pioneered by Souriau \cite{MR1461545}. Since the
actions provide information concerning the mobility of the particle
and are useful for many applications, e.g., for path integral
quantisation~\cite{Alekseev:1988vx}, we will provide them in this
section and analyse their (classical) properties. For each carrollion
we follow the method explained below and
Appendix~\ref{sec:maurer-cartan-one}, but see also
\cite[§4.4.5]{Bergshoeff:2022eog}, \cite[§§2,3]{Barnich:2022bni} and
\cite[§5]{Oblak:2016eij} for further useful details and references.

Let $\tau \mapsto g(\tau)$ be a curve in the Carroll group $G$.
The action corresponding to the coadjoint orbit of $\alpha \in \g^*$
is given by
\begin{align}
  \label{eq:Lcoadj}
  S=\int L d\tau  =\int \langle \alpha,g^{-1} \dot g \rangle d\tau \, .
\end{align}
The term $g^{-1}\dot g$, where $\dot g = \frac{dg}{d\tau}$, is the
pull-back of the left-invariant Maurer--Cartan
form~\eqref{eq:LI-MC-one-form}. For the case at hand the lagrangian is
then given by
\begin{align}
  \label{eq:action=general}
  L[R(\varphib),\bv,\x,t]  
  &= \tfrac12 \Tr \left( J^T R^T \dot R  \right)
  + (R\k)\cdot \dot\bv
  + (R\p)\cdot  \dot\x
  + E 
    \left(
    \dot t
  +  \tfrac12  \x \cdot \dot\bv
    -\tfrac12  \bv \cdot \dot\x
    \right) \\
  &= \tfrac12 \Tr 
    \left(
    J^T R^T \dot R
    \right)
  + (R\k + \tfrac12 E \x) \cdot \dot\bv
  + (R\p -\tfrac12 E \bv)\cdot \dot\x
  + E \dot t
\end{align}
The terms in square brackets denote the quantities that are varied (in
this case, $\varphib$, $\vb$, $\boldsymbol{x}$, $t$) while the remaining
quantities are not, i.e., $J=\varepsilon(\bj)$, $\k$, $\p$, $E$ are fixed.

The action has a global Carroll symmetry as can be seen from
$g^{-1} \dot g$ which is invariant under $\tau$-independent left
multiplication $g \mapsto hg$.  Infinitesimally these transformations
can be parametrised by
$\left(\boldsymbol{\lambda},\boldsymbol{\beta},\boldsymbol{a},s\right)$
and act as
\begin{align}
  \label{eq:Carroll_transf}
  \delta R&=\varepsilon(\bm{\lambda})R &
  \delta \vb&=\boldsymbol{\lambda} \times \vb+\boldsymbol{\beta} &
  \delta \boldsymbol{x}&=\boldsymbol{\lambda} \times \boldsymbol{x}+\boldsymbol{a} &
  \delta t&=s+\tfrac{1}{2}(\boldsymbol{\beta}\cdot\boldsymbol{x}-\boldsymbol{a}\cdot\vb) \, ,
\end{align}
where $\varepsilon(\bm{\lambda})_{ab} = -\epsilon_{abc}\lambda^{c}$
(see Appendix~\ref{sec:case-n=3}). This leads to the following Noether
charges, which we denote with a subscript $Q$:
\begin{equation}
  \label{eq:Q-general}
  \begin{split}
    \bj_{Q}&= R \bj + \bv \times R\k + \x \times R\p + E\bv \times  \x \\
    \k_{Q} &= R\k + E \x \\
    \p_{Q} &= R \pb -E\vb \\
    E_{Q} &= E.
  \end{split}
\end{equation}
This shows that the Noether charges are given by a coadjoint action on
$\alpha$.

These actions have gauge symmetry given by a right action
$g \mapsto g h(\tau)$, where $h(\tau)$ has to be in the stabiliser of
$\alpha$.  Since the stabiliser, and consequently the physical degrees
of freedom and constraints, depends on the specific particle, we will
now analyse them case by case.  As shown in Appendix
\ref{sec:maurer-cartan-one} (but see, e.g.,
also~\cite{Barnich:2022bni}), particle actions only depend on the
coadjoint orbit of $\alpha$ and not on $\alpha$ itself.  Therefore we
will feel free to choose a convenient representative in the following
in order to simplify our computations.  We will also neglect
subtleties related to quantisation, like, e.g., boundary terms.

\subsection{Massive spinless carrollion action}
\label{sec:action-1}

Massive spinless carrollions can always be brought into the ``rest
frame''
\begin{align}
  \label{eq:massive-s0-rep}
  \alpha=\left(0,\bzero,\bzero,-E_{0}\right),
\end{align}
where $E_{0}$ is a constant. The lagrangian~\eqref{eq:action=general}
is then up to a total derivative of the form
\begin{equation}
  \label{eq:L_E_Lag}
  L[\bv,\x,t]  =E_{0}\left(\vb\cdot\dot{\boldsymbol{x}}-\dot{t}\right) \, .
\end{equation}
We have used the fact that the rotations are part of the stabiliser to
go to a reduced phase space without $\varphib$. As discussed, this
action is by construction invariant under Carroll symmetries and the
Noether charges are given by restricting~\eqref{eq:Q-general} to
our choice of representative~\eqref{eq:massive-s0-rep}.

To express the lagrangian in Hamiltonian form, it is necessary to
introduce the canonical momenta
\begin{align}
  \boldsymbol{\pi}:=\frac{\partial L}{\partial\dot{\boldsymbol{x}}}=E_{0}\vb\,,\qquad E:=-\frac{\partial L}{\partial\dot{t}}=E_{0}.
\end{align}
The field $\vb$ is associated with the canonical momentum
$\boldsymbol{\pi}$ conjugate to $\boldsymbol{x}$.  In other words,
the first term in~\eqref{eq:L_E_Lag} is already in Hamiltonian form
when $\bm{\pi}=E_{0}\bm{b}$. The lagrangian of the massive Carroll
particle in a canonical form can then be written as
\begin{equation}
  L_{\text{can}}[\x,t,\bm{\pi},E,N] = - E\dot{t} + \boldsymbol{\pi}
  \cdot \dot{\boldsymbol{x}}-N\left(E-E_{0}\right),\label{eq:Lag_E_Hamiltonian}
\end{equation}
where $N$ is the Lagrange multiplier that enforces the constraint
$E-E_{0} = 0$. This restricts to a unique orbit and is the very
same constraint that we also found in Section~\ref{sec:e-neq-0}, cf.,
the planes in Figure~\ref{fig:mom_lim}.

A related action  was previously found
in~\cite{Bergshoeff:2014jla}, where the lagrangian is invariant under
time reversal, i.e., the constraint was chosen to be of the form
$E^{2}-E_{0}^{2}= 0$, describing both carrollian particles and
anti-particles simultaneously, i.e., it does not restrict to just one
orbit.  On the other hand, the lagrangian~\eqref{eq:Lag_E_Hamiltonian}
describes a particle, or an antiparticle, depending on the sign of
$E_{0}$. This case was also discussed
in~\cite[Appendix~A]{Duval:2014uoa} and \cite{deBoer:2021jej}.

Solving the constraint $E-E_{0} = 0$, and imposing the gauge
condition $t=\tau$, one finds the lagrangian describing the dynamics
in the reduced phase space
\begin{align}
  \label{eq:action-massive-s0}
  L_{\text{red}}\left[\boldsymbol{x},\boldsymbol{\pi}\right]=\boldsymbol{\pi}\cdot\dot{\boldsymbol{x}}-E_{0} \, .
\end{align}
It depends on 6 independent canonical variables, which is precisely
the dimension of the coadjoint orbit (see
Table~\ref{tab:carrollions}, orbit $\#1$), as it must.

The just discussed actions imply that
\begin{align}
  \dot \x &= \zerob & \dot{\bm{\pi}} &= \zerob & E&=E_{0} \, ,
\end{align}
i.e., an isolated massive Carroll particle does not move, has constant
conjugate momentum $\bm{\pi}$ and fixed energy $E_{0}$. Let us
emphasise that the velocity $\dot \x$, which is bound to be zero, is
not connected to the momentum $\bm{\pi}$ which, although constant, can
be nonzero. We will have more to say about these particles in
Section~\ref{sec:carr-vers-fract}, where we also highlight the
relation to fractons (more precisely, monopoles).

\subsubsection{Infinite symmetries of massive carrollions}
\label{sec:infin-symm-mass}

The action functional of the massive carrollion is invariant not only
under Carroll transformations but actually under an infinite-dimensional
symmetry. The lagrangian in \eqref{eq:Lag_E_Hamiltonian} is invariant
(up to boundary terms) under canonical transformations generated by an
arbitrary function
$F=F\left(E,\boldsymbol{x},\boldsymbol{\pi}\right)$. For a canonical
variable $z$, its transformation law is given by
$\delta z=\left\{ z,F\right\} $. Explicitly, they read
\begin{align}
\label{eq:transf_gen_symm}  
  \delta t&=-\frac{\partial F}{\partial E} &
  \delta E&=0 &
  \delta\boldsymbol{x}&=\frac{\partial F}{\partial\boldsymbol{\pi}} &
  \delta\boldsymbol{\pi}=-\frac{\partial F}{\partial\boldsymbol{x}}                                       
\end{align}
together with $\delta N=0$. 

The conserved charge associated to this symmetry is the function
$F\left(E,\boldsymbol{x},\boldsymbol{\pi}\right)$ itself. Indeed,
using the equations of motion
$\dot{E}=\dot{\boldsymbol{x}}=\dot{\boldsymbol{\pi}}=0$, one can
directly show that
\begin{align}
  \dot{F}\left(E,\boldsymbol{x},\boldsymbol{\pi}\right)=0 \, ,
\end{align}
i.e., it is conserved under $\tau$ evolution. This discussion also
generalises for the massive spinning carrollions.

There are some interesting particular cases. For example, the Carroll
transformations are obtained from the following generator
\begin{equation}
  \label{eq:F-Carroll}
  F
  = \boldsymbol{\lambda} \cdot \left(\boldsymbol{x} \times \boldsymbol{\pi}\right)
    - E \bm{\beta} \cdot \boldsymbol{x} + \boldsymbol{a} \cdot \boldsymbol{\pi} - s E \, .
\end{equation} 
In addition, the weak carrollian structure of Carroll
spacetime~\cite{Henneaux:1979vn,Duval:2014lpa} has symmetries that include
non-linear transformations, which can be obtained using the following
generator
  
\begin{equation}
  F=-Ef(\x) \, , 
\end{equation}
with $f(\x)$ being an arbitrary function of the coordinates which
generalises the linear Carroll boosts $f(\x)=\bm{\beta}\cdot \x$. Thus,
using~\eqref{eq:transf_gen_symm} one finds
\begin{align}
  \delta t&=f(\boldsymbol{x}) &
  \delta  E&=0 &
  \delta\boldsymbol{x}&=0 &
  \delta\boldsymbol{\pi}&=E\frac{\partial f(\x)}{\partial\boldsymbol{x}} \,.
\end{align}
This symmetry is also present in the action of the massive carrollian
scalar field (``electric'') which we discuss in
Section~\ref{sec:electr-magn-carr}.

\subsection{Massive spinning carrollion action}
\label{subsec:Massive-spinning-Carroll}

In order to incorporate a non-vanishing spin, we choose the following representative
of the coadjoint orbit
\begin{align}
  \label{eq:repr-massive-s}
  \alpha=\left(J,\bzero,\bzero,-E_{0}\right).
\end{align}
Here, $E_{0}$ is a constant and $J$ is a generic angular momentum
matrix (see Appendix~\ref{app:carroll-symmetry}). Both of these
quantities are considered as fixed parameters in the action.

From~\eqref{eq:action=general} we obtain the lagrangian
\begin{align}
  L[R(\varphib),\bv,\x,t]
  & =\frac{1}{2}\Tr\left(J^{T}R^{T}\dot{R}\right)+E_{0}\left(\vb\cdot\dot{\boldsymbol{x}}-\dot{t}\right).\label{eq:actionWithS1}
\end{align}
Here, $R=R\left(\varphib\right)$ is a rotation matrix that
depends on three independent angles $\varphi^{a}$. We have not yet specified
the particular dependence of $R$ in terms of these angles. Consequently,
the lagrangian can be rewritten as
\begin{align}
  \label{eq:actionWithS2}
  L[R(\varphib),\bv,\x,t]
  & =\frac{1}{2}\Tr\left(J^{T}R^{T}\frac{\partial R}{\partial\varphi^{a}}\right)\dot{\varphi}_{\;}^{a}
    +E_{0}\left(\vb\cdot\dot{\boldsymbol{x}}-\dot{t}\right).
\end{align}
By comparing this lagrangian with that of the spinless
case~\eqref{eq:L_E_Lag}, it is evident that there is an additional
term present which takes into account the spin degrees of freedom,
$\varphi^{a}$. The presence of additional degrees of freedom due to
spin is also a feature of actions describing Poincaré invariant
particles (see, e.g.,
\cite{Hanson:1974qy,Casalbuoni:1975hx,Brink:1976uf,Berezin:1976eg}
and~\cite{Hanson:1976cn} for a useful summary).

This action is again invariant under Carroll symmetry with Noether
charges given by~\eqref{eq:Q-general} restricted
to~\eqref{eq:repr-massive-s}. In particular since
\begin{align}
  \bj_{Q} - E_{Q}^{-1} \p_{Q} \times \k_{Q} = R \bj
\end{align}
we see that these particles have spin (cf.~\eqref{eq:spin-def})
\begin{align}
 \spb \cdot \bm{\lambda}  =\frac{1}{2}\Tr\left(J^{T}R^{T}\varepsilon(\bm{\lambda} )R\right) =R \bj\cdot \bm{\lambda}
\end{align}
which describes the non-orbital part of the total angular momentum
with $\|\boldsymbol{S}\|^{2}=S^{2}$ an invariant quantity, as
expected.

We shall use the following parametrisation of the rotation matrix 

\begin{equation}
R(\varphib)=e^{\varphi_{1}\varepsilon_{1}}e^{\varphi_{2}\varepsilon_{2}}e^{\varphi_{3}\varepsilon_{3}}.\label{eq:Coords_sphere}
\end{equation}
(where explicitly $(\varepsilon_{a})_{bc}=-\epsilon_{abc}$,
see~\eqref{eq:epsilon-rep}). In addition, by an appropriate rotation
we can always choose to align $J$ with the $z$-axis, i.e., $J=S\varepsilon_{3}$
for a constant $S>0$. Using this parametrisation, the
lagrangian~\eqref{eq:actionWithS2} takes the form
\begin{equation}
L\left[\varphib,\vb,t,\x\right]=S\left(\dot{\varphi}_{3}+\sin\left(\varphi_{2}\right)\dot{\varphi}_{1}\right)+E_{0}\left(\vb\cdot\dot{\boldsymbol{x}}-\dot{t}\right),\label{eq:Action_withS_frame}
\end{equation}
and the spin vector becomes
\[
\boldsymbol{S}=S\hat{\boldsymbol{n}},
\]
where $\hat{\boldsymbol{n}}$ is a unit vector given by
\begin{equation}
\hat{\boldsymbol{n}}=\left(\sin\varphi_{2},-\sin\varphi_{1}\cos\varphi_{2},-\cos\varphi_{1}\cos\varphi_{2}\right).\label{eq:normal}
\end{equation}
Therefore, $\|\boldsymbol{S}\|^{2}=S^{2}$ is an invariant
quantity, as expected.

The canonical momenta associated with the spin degrees of freedom
are given by

\begin{equation}
\Pi_{1}=\frac{\partial L}{\partial\dot{\varphi}_{1}}=S\sin\varphi_{2}\,,\qquad\Pi_{2}=\frac{\partial L}{\partial\dot{\varphi}_{2}}=0\,,\qquad\Pi_{3}=\frac{\partial L}{\partial\dot{\varphi}_{3}}=S.\label{eq:MomentaSpositive}
\end{equation}
The last two terms define constraint equations. Therefore, if $\boldsymbol{\pi}=E_{0}\vb$,
the lagrangian (\ref{eq:Action_withS_frame}) can be written in canonical
form as
\begin{equation}
  \label{eq:Action_withS_Hamiltonian}
L_{\text{can}}=\boldsymbol{\Pi}\cdot\dot{\varphib}-E\dot{t}+\boldsymbol{\pi}\cdot\dot{\boldsymbol{x}}-N\left(E-E_{0}\right)-\eta_{2}\Pi_{2}-\eta_{3}\left(\Pi_{3}-S\right)
\end{equation}
where
$L_{\text{can}}[\bm{\varphi},\xb,t, 
\boldsymbol{\Pi},\boldsymbol{\pi},E,N, \eta_{2},\eta_{3}]$.

To highlight the physical degrees of freedom we solve the constraints,
impose the gauge condition $t=\tau$ and neglect boundary terms. The
lagrangian in the reduced phase space is given by
\begin{align}
  L_{\text{red}}\left[\varphi^{1}, \boldsymbol{x}, \Pi_{1},\boldsymbol{\pi}\right]=\Pi_{1}\dot{\varphi}^{1}+\boldsymbol{\pi}\cdot\dot{\boldsymbol{x}}-E_{0}.
\end{align}
Here $\varphi^{1}$, together with its conjugate $\Pi_{1}$, describe
the spin degrees of freedom. Note that the reduced action depends on
$8$ independent canonical variables, which coincides with the
dimension of the coadjoint orbit (see Table~\ref{tab:carrollions},
orbit \#2).

\subsection{Spinning vacuum action}
\label{subsec:Spinning-vacuum}

While the vacuum has a trivial action the spinning vacuum can be
obtained using the following representative of the coadjoint orbit
\begin{align}
  \label{eq:spin-vac-rep}
  \alpha=\left(J,\bzero,\bzero,0\right).
\end{align}
The resulting particle action can be obtained from the results of the
preceding section by setting $E_{0}=0$ in \eqref{eq:actionWithS2}, so
we will not repeat the computation. Let us note that the spinning
vacuum is described by the geometric action for
$SO(3)$~\cite{Alekseev:1988vx}.

\subsection{Massless carrollion action}
\label{sec:massless-with-spin}

Let us choose the following representative of the coadjoint orbit
\begin{align}
  \alpha=\left(J,\bzero,\boldsymbol{p},0\right),
\end{align}
where 
\begin{align}
  \|\boldsymbol{p}\|^{2}=p^{2},
\end{align}
with constant $p>0$. Using~\eqref{eq:action=general} we obtain the
lagrangian
\begin{align}
  L\left[\varphib,\boldsymbol{x}\right]=\frac{1}{2}\Tr\left[J^{T}R^{T}\frac{\partial R}{\partial\varphi^{a}}\right]\dot{\varphi}^{a}+\left(R\boldsymbol{p}\right)^{T}\dot{\boldsymbol{x}}
\end{align}
with nonzero Noether charges
\begin{align}
    \bj_{Q}&= R \bj  + \x \times R\p  &
    \p_{Q} &= R \pb.
\end{align}

Let us parametrise the rotation matrix as in~\eqref{eq:Coords_sphere}
and align $J$ and $\boldsymbol{p}$ with the $z$-axis, i.e.,
$J=h \varepsilon_{3}$ with helicity $h$ constant and
$\boldsymbol{p}=\left(0,0,p\right)$. The lagrangian is then given by
\begin{equation}
  \label{eq:Action_withS_frame-2}
  L \left[\boldsymbol{x},\varphib\right] = h 
  \left(\dot{\varphi}_{3} + \sin\left(\varphi_{2}\right) \dot
    {\varphi}_{1}\right) + p \,\boldsymbol{\hat{n}} \cdot
  \dot{\boldsymbol{x}},
\end{equation}
where the unit vector $\hat{\boldsymbol{n}}$ is defined
in~\eqref{eq:normal}, and the constants $h$ and $p$ are kept fixed in
the action. Using this parametrisation it is
straightforward to show that
\begin{align}
\|\p_{Q}\|^{2}&=p^{2} & \boldsymbol{j}_{Q}\cdot\boldsymbol{p}_{Q}=h  \|\p_{Q}\|,
\end{align}
are invariant quantities, in perfect agreement with our analysis in
Section~\ref{sec:massless-parallel}.

To write the massless carrollion action in canonical form we introduce
the canonical momenta
\begin{align}
  \boldsymbol{\pi}&=\frac{\partial L}{\partial\dot{\boldsymbol{x}}}=p\hat{\boldsymbol{n}} &
  \Pi_{1}&=\frac{\partial L}{\partial\dot{\varphi}_{1}}=h\sin\varphi_{2} &
  \Pi_{2}&=\frac{\partial L}{\partial\dot{\varphi}_{2}}=0 &
  \Pi_{3}&=\frac{\partial L}{\partial\dot{\varphi}_{3}}=h 
\end{align}
which satisfy the following constraints
\begin{align}
  \|\boldsymbol{\pi}\|^{2}-p^{2}&=0 &
  p\Pi_{1}-h \pi_{1}&=0 &
  \Pi_{2}&=0 &
  \Pi_{3}-h &=0.
\end{align}
Therefore, the lagrangian in canonical form becomes 
\begin{align}
  &L_{\text{can}}[\varphib,\boldsymbol{x},t,\boldsymbol{\Pi},\boldsymbol{\pi},E,N, \eta_{i}]= \\
  &\qquad
    \boldsymbol{\Pi}\cdot\dot{\varphib}
    +\boldsymbol{\pi}\cdot\dot{\boldsymbol{x}} -E\dot{t} -NE
    -\eta_{0}\left(\|\boldsymbol{\pi}\|^{2}-p^{2}\right)
    -\eta_{1}\left(p\Pi_{1}-h_{p}\pi_{1}\right)-\eta_{2}\Pi_{2}-\eta_{3}\left(\Pi_{3}-h\right)\, . \nonumber
\end{align}
Implementing the constraints $E=0$, $\Pi_{2}=0$ and $\Pi_{3}=h$,
up to boundary terms, the lagrangian takes the form
\[
  L_{\text{can}}[\varphi^{1},\boldsymbol{x},\Pi_{1},\boldsymbol{\pi}, \eta_{0},\eta_{1}]
  =\Pi_{1}\dot{\varphi}^{1}
  +\boldsymbol{\pi}\cdot\dot{\boldsymbol{x}}
  -\eta_{0}\left(\|\boldsymbol{\pi}\|^{2}-p^{2}\right)
  -\eta_{1}\left(p\Pi_{1}-h \pi_{1}\right).
\]

In the particular case when the spin vanishes, $h=0$, the
lagrangian \eqref{eq:Action_withS_frame-2} simplifies to
\begin{equation}
  L[\x]=p\boldsymbol{\hat{n}}\cdot\dot{\boldsymbol{x}},\label{eq:Action_withS_frame-2-1}
\end{equation}
while the lagrangian in canonical form becomes
\begin{align}
  L_{\text{can}}[\boldsymbol{x},t,\boldsymbol{\pi}, E,N,\eta_{0}]
  =\boldsymbol{\pi}\cdot\dot{\boldsymbol{x}}-E\dot{t}-NE
  -\eta_{0}\left(\|\boldsymbol{\pi}\|^{2}-p^{2}\right)\, .
\end{align}
The action for the case $h=0$ was previously studied
in~\cite{deBoer:2021jej}.

As discussed in Appendix~\ref{sec:orbits-mod-autos}, under the outer
automorphisms of the Carroll algebra discussed in
Appendix~\ref{sec:automorphisms}, the orbit discussed in this section
is mapped to those of cases \#5 and \#7 (see Table
\ref{tab:carrollions}). As a consequence, the action for all these
cases takes precisely the same intrinsic form. However, the physical
interpretation may differ in general. We shall revisit this point in
Section~\ref{sec:carr-vers-fract} where we will provide a useful
physical interpretation in the context of fractons and also discuss
the mobility restrictions.

\subsection{Generic massless carrollion action}
\label{sec:generic-1}

The action for the generic massless case can be constructed using
the following representative element of the coadjoint orbit
\begin{align}
  \label{eq:L-generic}
  \alpha=\left(0,\boldsymbol{k},\boldsymbol{p},0\right),
\end{align}
where $\k$ and $\p$ are nonzero and not parallel.

From~\eqref{eq:action=general} we obtain the lagrangian
\begin{align*}
L\left[\boldsymbol{\mathbf{\varphi}},\boldsymbol{x},\vb\right] & =\left(R\boldsymbol{p}\right)\cdot \dot{\boldsymbol{x}}+\left(R\boldsymbol{k}\right) \cdot \dot{\vb}.
\end{align*}
The non-vanishing Carroll conserved charges of this action are given by
\begin{align}
    \bj_{Q}&= \bv \times R\k + \x \times R\p &
    \k_{Q} &= R\k &
    \p_{Q} &= R \pb
\end{align}
From these expressions it is clear that we indeed recover 
\begin{align}
\|\p_{Q} \|^{2}&=p^{2}  & \| \vb_{Q} \|^{2} &=k^{2} & \pb_{Q} \cdot \vb_{Q}=pk\cos\theta,  
\end{align}
as discussed in Section~\ref{sec:generic}.

Since the action is more transparent in canonical form we introduce
the canonical momenta are given by
\begin{align}
  \boldsymbol{\pi}=\frac{\partial L}{\partial\dot{\boldsymbol{x}}}&=R\boldsymbol{p} & \boldsymbol{\pi}_{v}&=\frac{\partial L}{\partial\dot{\vb}}=R\boldsymbol{k}
\end{align}
which are restricted to obey the constraints
\begin{align}
\|\boldsymbol{\pi}\|^{2}-p^{2}&=0 & \|\boldsymbol{\pi}_{v}\|^{2} -k^{2}&=0 & \boldsymbol{\pi}\cdot\boldsymbol{\pi}_{v}-pk\cos\theta&=0.
\end{align}
Therefore, the lagrangian in canonical form can be written as
\begin{equation}
  \label{eq:Icangeneral}
  L_{\text{can}}
  =-E\dot{t}+\boldsymbol{\pi}\cdot\dot{\boldsymbol{x}}+\boldsymbol{\pi}_{v}\cdot\dot{\vb}
  -NE
  -\eta_{1}\left(\|\boldsymbol{\pi}\|^{2}-p^{2}\right)
  -\eta_{2}\left(\|\boldsymbol{\pi}_{v}\|^{2}-k^{2}\right)
  -\eta_{3}\left(\boldsymbol{\pi}\cdot\boldsymbol{\pi}_{v}-pk\cos\theta\right).
\end{equation}
where $L_{\text{can}}[\vb,\xb,t,\bm{\pi}_{v},\bm{\pi},E,N,\eta_{i}]$.
In addition to the position $\boldsymbol{x}$ and its conjugate
momentum $\boldsymbol{\pi}$, the state of a generic massless Carroll
particle is also described by a vector field $\vb$ and its
conjugate momentum $\boldsymbol{\pi}_{v}$, which can be considered as
an additional internal degree of freedom. In
Section~\ref{sec:carr-vers-fract}, we shall provide a physical
interpretation of this case in the context of fractons and provide a
discussion of the solutions. The counting of degrees of freedom shows
that there are 6 independent canonical variables, in agreement with
the dimension of the orbit.

\section{From Poincaré to Carroll particles}
\label{sec:carr-vs-massl}

In this section we study the carrollian limit of Poincaré particles.
We intend to provide a useful orientation and not a comprehensive
study of the possible limits. In short, there is a clear relation
between massive spinning Poincaré and massive spinning Carroll
particles.  Massless carrollions seem to derive from a limit of
Poincaré tachyons, and the limit of massless Poincaré particles seems
to trivialise, see Figure~\ref{fig:mom_lim}.

\begin{figure}
\centering
\begin{tikzpicture}
  \node at (0,0) {\includegraphics[width=0.5 \textwidth]{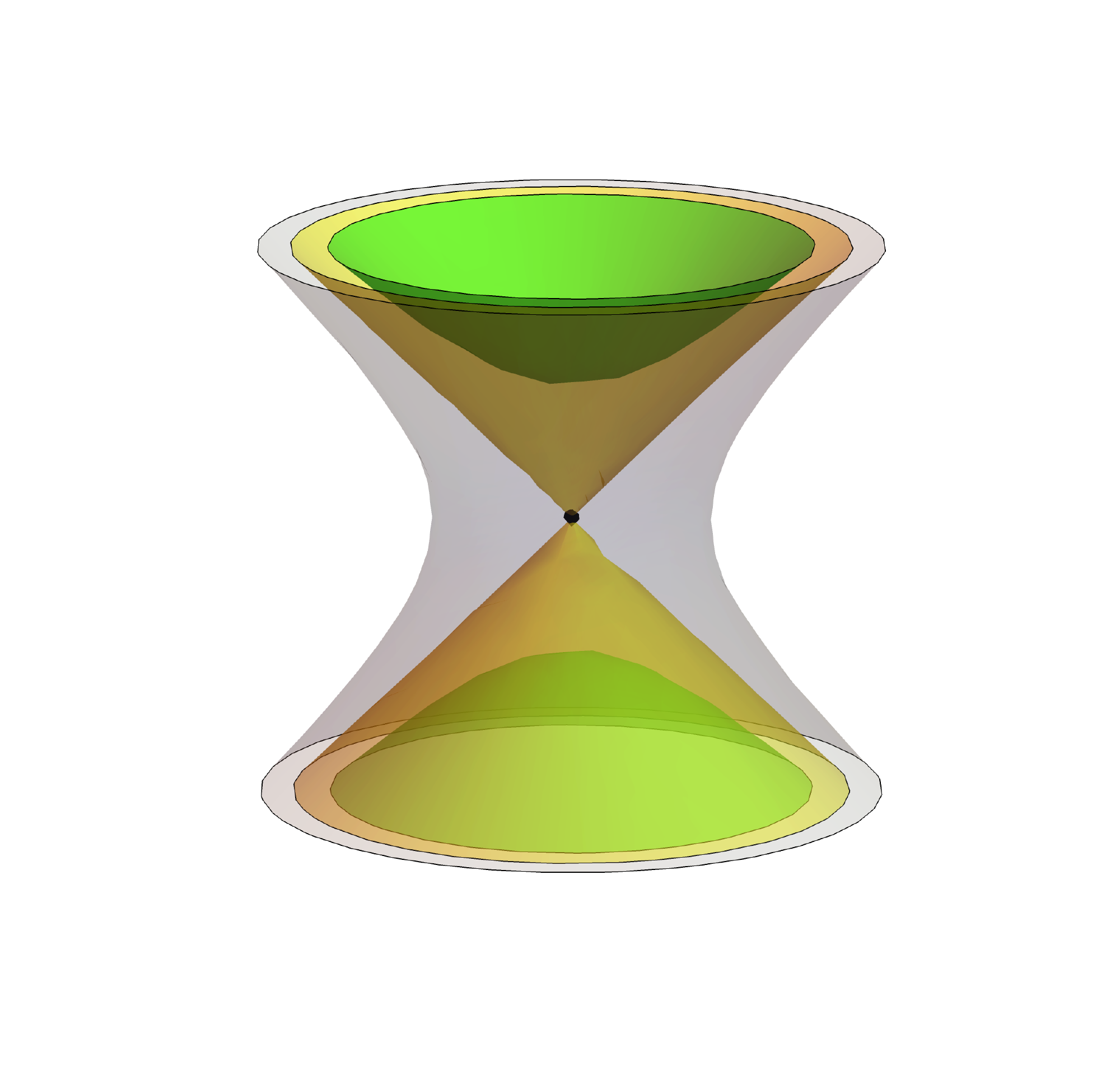}};
  \node at (7,0) {\includegraphics[width=0.5 \textwidth]{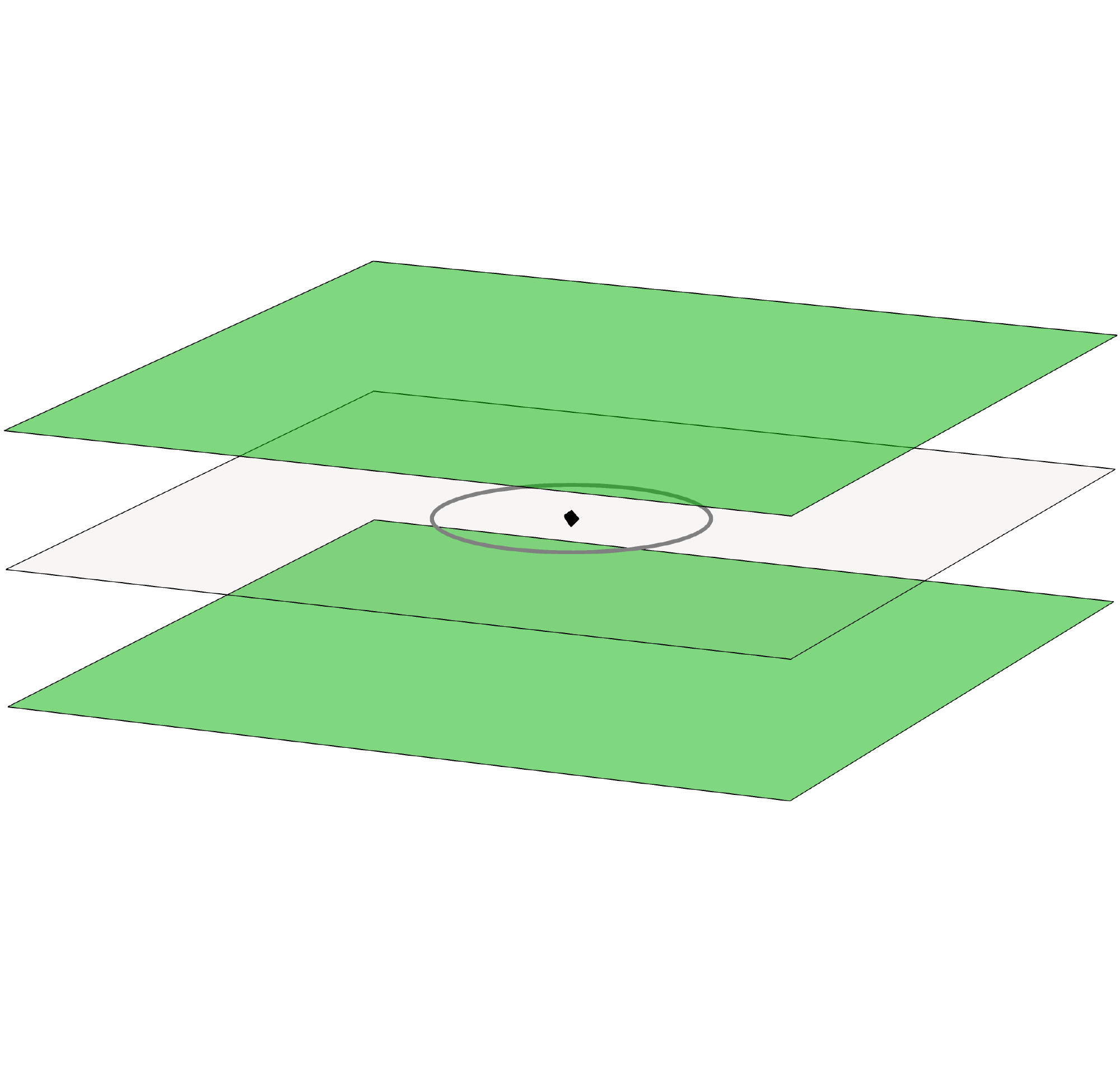}};
  \node at (0,-3) {Poincaré};
  \node at (7,-3) {Carroll};
  \draw[-Stealth,thick] (2.7,-3) -- (2.7,3);
  \node[right] at (2.7,2.9) {$E$};
\end{tikzpicture}
\caption{This figure shows the coadjoint orbits of the Poincaré and
  Carroll group in momentum space $(E,\p)$. The Poincaré orbits are
  given by two families of massive particles (green), two massless
  orbits (yellow),
  one family of tachyonic orbits (gray) and the vacuum (black dot).\\
  In the Carroll limit the green massive Poincaré orbits
  (hyperboloids) flatten out and lead to the planes (see
  Section~\ref{sec:massive-case}). The lightcone would lead to a the
  $E=0$ plane, but since $E=\|\pb \|$ this orbit actually vanishes in
  the limit. The sphere ($\bm{p}^{2} = const.$), represented as a
  circle in the $E=0$ plane on the right-hand side, can be seen to
  arise as a limit of the tachyonic orbits (see
  Section~\ref{sec:massless-case} for details). The whole carrollian
  $E=0$ plane is foliated by such spheres.
  \\
  Let us emphasise that this figure only represents the $(E,\p)$ part
  of the full dual space $(\jb,\bv,\p,E)$ and the complete structure
  of the orbits is more intricate and involves spin degrees of
  freedom.}
    \label{fig:mom_lim}
\end{figure}

\subsection{Poincaré particles}
\label{sec:poincare-particles}
  
We first review some well-known facts about Poincaré particles. The
coadjoint action of a Poincare transformation $(\Lambda,a)$ on the
four-momenta $p^{\mu} = (E,\p)$ is given by $p' = \Lambda p$, where we
can see that the translations act trivially. It follows that
$p_{\mu}p^{\mu}$ is an invariant on the coadjoint orbits. Using the
rest mass $M$ this foliates the momentum space into disjoint orbits
\begin{align}
  \label{eq:psqared}
  -  p_{\mu}p^{\mu}= E^{2} -  \|\p\|^{2}  =\left\{
    \begin{array}{cl}
      M^{2} & \quad\text{massive} \\
      0 & \quad\text{massless} \\
      - M^{2} & \quad\text{tachyonic}
    \end{array}
\right. \, .
\end{align}
They are given by (see Figure~\ref{fig:mom_lim}): a family of
two-sheeted hyperboloids, with each sheet being an orbit,
corresponding to the massive momenta, one family of one-sheeted
hyperboloids, corresponding to tachyonic momenta and the future and
past deleted lightcone, corresponding to massless momenta, and the
origin corresponding to the vacuum.

In the following we will also use the second Poincaré invariant, the
square of the Pauli--Luba\'nski vector
\begin{align}
  \label{eq:PLv}
 W^{2}  =-  (\jb \cdot \p)^{2} + \| E \jb + \p \times \k \|^{2} \, ,
\end{align}
which should be contrasted with its carrollian
analog~\eqref{eq:carr-inv}.

\subsection{Carroll limits}
\label{sec:carr-limits}

Let us now discuss the Carroll limit of the Poincaré coadjoint orbits.
We do not claim that these are the unique limits and we will call the
contraction parameters uniformly by $c$ (which will not necessarily
have the units of velocity).

\subsubsection{Massive Poincaré to massive Carroll particles}
\label{sec:massive-case}

Upon suitable rescalings the carrollian limit of the invariants of the
Poincaré group lead to
\begin{align}
  \label{eq:psqtozero}
 \lim_{c \to 0} (-p_{\mu}p^{\mu}) = \lim_{c \to 0} (E^{2} - c^{2} \|\p\|^{2})  =E^{2} =M^{2}
\end{align}
and
\begin{align}
  \label{eq:Wtocarr}
  \lim_{c \to 0} W^{2}
  = \lim_{c \to 0} \left[- c^{2} (\jb \cdot \p)^{2} + \| E \jb + \p \times \k\|^{2}\right]
  =  \|E \jb + \p \times \k\|^{2} = E^{2} S^{2} \, .
\end{align}
When we identify $M \to E_{0}$ we recover the invariants of the
massive carrollions that were already discussed in
Section~\ref{sec:e-neq-0}. Since they uniquely characterise the
massive orbits of Poincaré and Carroll we can map for any spin $S$ the
massive Poincaré particles to massive Carroll particles and vice
versa.

Geometrically the Carroll limit of~\eqref{eq:psqtozero} implies that
in momentum space $(E,\p)$ the hyperbolic mass-shells get flattened to
planes with $E=\pm E_{0}$, cf.,~the green surfaces in
Figure~\ref{fig:mom_lim}. The stabiliser of the momentum orbit is in
both cases given by $SO(3)$, which implies a close relation between
the induced representations.

\subsubsection{Carroll limit of the vacuum sector}
\label{sec:vacuum-limit}

If we restrict to the Poincaré vacuum sector $p^{\mu} = 0$ the action
of the translations on the coadjoint orbits trivialises and we are
left with the coadjoint orbits of the Lorentz group $SO(3,1)$. In this
sector we have the invariants $- \|\bm{k}\|^{2} + \|\bm{j}\|^{2}$ and
$\bm{j} \cdot \bm{k}$ (they derive from $j^{\mu\nu}j_{\mu\nu}$ and
$ \epsilon^{\mu\nu\rho\xi}j_{\mu\nu} j_{\rho\xi}$, respectively).

Taking the limit
\begin{align}
  \label{eq:klimit}
  \lim_{c\to 0}(\| \k\| ^{2} - c^{2} \|\jb\|^{2}) =\|\k\|^{2} \, ,
\end{align}
and leaving the other invariant unaltered provides a limit to what we
called centrons in Table~\ref{tab:carrollions}. So they derive from
the Poincaré vacuum
sector.

Taking the dual limit of~\eqref{eq:klimit} leads to the
invariants of the spinning vacua $\|\bm{j}\|^{2}=j^{2}$.

\subsubsection{Carroll limit of massless and tachyonic Poincaré particles}
\label{sec:massless-case}

For the remaining orbits we restrict the discussion to energy and
momentum space. One can then take the dual limit
of~\eqref{eq:psqtozero}, upon suitable rescaling, given by
\begin{align}
  \label{eq:psqtozero2}
  \lim_{c \to 0} (p_{\mu}p^{\mu}) = \lim_{c \to 0} (-c^{2} E^{2} +  \|\p\|^{2})  = \|\p\|^{2} =M^{2} \,.
\end{align}
When we identify $M \to p$ we recover the relation $\| \p\| = p$ which
was already discussed in Section~\ref{sec:carr-part-3+1}. This limit
is just the restriction of the three dimensional de Sitter space to
the two spheres of radius $p$ in the $E=0$ plane, see the gray
hyperboloid and circle in Figure~\ref{fig:mom_lim}. This suggests that
the massless carrollions emerge from tachyons.

From this perspective is it also clear that the massless Poincaré
particles lead to $E=0$ and $\bm{p} = \zerob$. We start by setting
$M=0$ in~\eqref{eq:psqared}, which leads to the lightcone
\begin{align}
  \label{eq:massless}
  E^{2} = \|\p\|^{2}  \, .
\end{align}
If we want to keep rotational invariance we can send either side to
zero, but that implies that both sides vanish. Geometrically we can
think about it as light cone that opens up with the limit being the
$E=0$ plane. But due to the relation~\eqref{eq:massless} now $\pb$
also vanishes.

\section{Carroll/fracton correspondence}
\label{sec:carr-vers-fract}

In this section we explore the relationship between the Carroll and
dipole symmetries. In particular, we shall focus on models with a
dipole symmetry. As it was explained, their symmetries are closely related, 
and therefore one can map carrollions to fractons using the
Carroll/fracton correspondence. Fracton monopoles are related to massive
carrollions, both of which cannot move. On the other hand, the massless
carrollions are related to fractonic dipoles. Some of the unusual
properties of carrollions and fractons will be elucidated by an analysis
based on the use of two fundamental monopoles, together with the
application of appropriate constraints on the phase space.

\subsection{Fracton symmetry and Carroll/fracton correspondence}
\label{sec:fracton-symmetry}

From a purely theoretical perspective, fractons are very puzzling
objects, as they seem to defy the standard methods of quantum field
theory (see the reviews
\cite{Nandkishore:2018sel,Pretko:2020cko,Grosvenor:2021hkn} for
(original) references, further applications and details). In
particular, for the complex scalar field theory introduced in
\cite{Pretko:2018jbi}, there exists a conserved dipole charge in
addition to the electric charge. Suppose these symmetries act as
\begin{align}
  \phi(t,\xb) \to e^{i (\alpha + \bm{\vb}\cdot\x)} \phi(t,\xb) \, ,
\end{align}
where $\alpha$ and $\vb$ parametrise the charge and dipole
transformations, respectively. When we then insist on spatial
derivatives in the action, terms like
$\partial_{i}\phi^{*}\partial_{i}\phi$ are forbidden and we are led to
non-gaussian and non-lorentzian theories~\cite{Bidussi:2021nmp}
(see~\eqref{eq:frac-field} for an explicit action). An immediate
consequence is that the conventional expansion of the fields in terms
of oscillators, commonly used in quantum field theory to define
particle states, is not directly applicable to fractonic models.
Therefore, a natural unanswered question is how to determine the
elementary excitations or particles of fractonic theories.

To answer this question we can use a remarkable
fact~\cite{Bidussi:2021nmp} (see also~\cite{Marsot:2022imf}):
\emph{the symmetries of fractonic theories with a conserved electric
  and dipole charges coincide with the Carroll symmetries, up to the
  inclusion of an additional central element}. Thus,
the analysis of the coadjoint orbits developed for carrollian theories in the previous
sections can be easily extended to characterise the elementary
excitations of fractonic models with dipole charges. While they have
different interpretations it is interesting to think about carrollions
from the fracton perspective and vice versa.

The generators of the dipole algebra~\cite{Gromov:2018nbv}, spanned
by the fracton energy $H_{F}$, angular momentum $L_{ab}$, linear momentum
$P_{a}$, dipole charge $D_{a}$, and electric charge $Q$, have the
following non-vanishing commutation relations
\begin{equation}
  \label{eq:frac-alg}
  \begin{split}
  \left[L_{ab},L_{cd}\right] & =\delta_{bc}L_{ad}-\delta_{ac}L_{bd}-\delta_{bd}L_{ac}+\delta_{bd}L_{ac} \\
  \left[L_{ab},P_{c}\right] & =\delta_{bc}P_{a}-\delta_{ac}P_{b}\\
  \left[L_{ab},D_{c}\right] & =\delta_{bc}D_{a}-\delta_{ac}D_{b}\\
  \left[D_{a},P_{b}\right] & =Q\delta_{ab} 
  \end{split}\, .
\end{equation}
This is precisely the Carroll algebra~\eqref{eq:carroll-kla-brackets},
when we apply $Q \mapsto H$ and $D_{a} \mapsto B_{a}$ and ignore the
additional generator $H_{F}$ that commutes with all the generators of the
algebra. This implies that in a Carroll/fracton correspondence the Carroll
energy is in correspondence with the fractonic electric charge, and
the carrollian center-of-mass is related to the dipole charge. The
fracton energy, which commutes with all the generators, does not have
a counterpart in the Carroll algebra. Table~\ref{tab:carrvsfrac}
exhibits the precise correspondence between the conserved quantities
of carrollian and fractonic theories. In~\cite{IP:2023} we will
provide a study of other fracton-like models and provide their
correspondence with different non-relativistic systems.

\begin{table}
  \centering
  \caption{Correspondence of conserved quantities between carrollian
    and fractonic theories}
   \resizebox{0.6 \linewidth}{!}{
     \begin{tabular}{l  l }
       \toprule
       Carroll particles  \qquad \qquad \qquad \qquad      & Fractonic particles          \\ \midrule
       angular momentum $\jb$  & angular momentum $\jb$\\
       center-of-mass $\k$    & dipole moment $\db$   \\
       momentum  $\pb$       & momentum  $\pb$      \\
       energy  $E$         & charge      $q$     \\
       ---              & energy       $E$    \\
       \bottomrule
  \end{tabular}
  }
  \label{tab:carrvsfrac}
\end{table}

While the Carroll/fracton correspondence might be useful in general we will
now apply it on the level of the elementary particles. In what
follows, we shall use the results of the previous sections to identify
the fracton particles. Like carrollions, fractons fall into two
categories: when the charge $q$ vanishes they correspond to (immobile)
monopoles, but when the charge is nonzero they are given by neutral
fractons, in particular dipoles which are mobile.

This can be seen by applying the correspondence of Table~\ref{tab:carrvsfrac}
to the coadjoint action~\eqref{eq:coadjoint-rep-n=3-summary}. The
coadjoint action for fractons acts therefore on
$\alpha =(\bj,\db,\p,q,E)$, where the conserved quantities are
described in Table~\ref{tab:carrvsfrac}, via the coadjoint
$\Ad^*_g \alpha = (\bj', \db', \p', q',E')$ as follows
\begin{equation}
  \label{eq:coadjoint-fractons}
  \begin{split}
    \bj' &= R \bj + \bv \times R\db + \ab \times R\p + q\vb \times \ab\\
    \db' &= R\db + q\ab\\
    \p' &= R\p - q\vb\\
    q' &= q \\
    E' &=E
  \end{split} \, .
\end{equation}
The group element $g = (R,\bv,\ba,\lambda,s)$ is given by rotations $R$,
dipole transformations $\bv$, spatial translations $\ab$, charge
rotations $\lambda$ and temporal translations $s$ (where the last two
act trivially). A few remarks concerning the fracton coadjoint orbits
are in order:
\begin{itemize}
\item By inspecting~\eqref{eq:coadjoint-fractons} and by our
  discussions of the Carroll symmetries it is clear that $E$, $q$ and
  $W^2 = \|q \bj + \p \times \db \|^2$ are invariants.

\item The coadjoint action indeed agrees with the intuition of the
  transformation behaviour of the well-known dipole moment
  $\db=\int \x \rho(\x) d^{3}x$ of electrodynamics ($\rho$ is the
  charge density). Suppose we have a particle with nonzero charge $q$
  at the origin, then $\rho(\x)=q \delta(\x)$ and $\db=\zerob$. If we
  now shift the particle to the point $\ab$ we obtain
  $\rho(\x)=q \delta(\x-\ab)$ and as expected a non-vanishing dipole
  moment given by $\db'=q \ab$. This is just a manifestation of the
  well-known fact that the dipole moment for nonzero charge depends on
  the choice of origin, see, e.g,~\cite{griffiths1999introduction}.

  This shift is also precisely the one we obtain by the coadjoint
  action of a pure translation $\ab$, with group element
  $g = (R=1,\bv=\zerob,\ba,\lambda=0,s=0)$, on a charged particle with
  zero dipole moment $\alpha=(\jb=\zerob,\db=\zerob,\p=0,q,E)$,
  see~\eqref{eq:coadjoint-fractons}.

\item A subtle difference between Carroll and fracton particles is the
  global group structure.  While we assumed that the action of Carroll
  time translations results in noncompact orbits (so that the subgroup
  generated by $H$ is isomorphic to $\RR$), the group of charge phase
  rotations is expected to be $U(1)$.  This generator is however
  central and therefore its coadjoint action is trivial and hence, at
  the level of the coadjoint orbits, the topology of the subgroup it
  generates is immaterial.
\end{itemize}

The elementary fractonic particles are physically interpreted as
charged monopoles and different classes of neutral fractons, in
particular dipoles. Their dynamics is described by actions that are in
a one-to-one correspondence with those of
Section~\ref{sec:particle-actions}. In particular, the analysis of the
dynamics of these systems will reveal the restricted mobility of
fractonic particles. For simplicity in the presentation, we shall only
consider the cases of spinless particles. The actions for spinning
fractons can be directly derived by using the results of
Section~\ref{sec:particle-actions}. So the fractonic dual
of~\eqref{eq:action=general} and the starting point for our analysis
of the actions $S=\int L d\tau $ is
\begin{align}
  \label{eq:frac-action-general}
    L[R(\varphib),\bv,\x,\phi,t]  
  &= \tfrac12 \Tr \left( J^T R^T \dot R  \right)
  + (R\db)\cdot \dot\bv
  + (R\p)\cdot  \dot\x
  + q 
    \left(
    \dot \Phi
  +  \tfrac12  \x \cdot \dot\bv
    -\tfrac12  \bv \cdot \dot\x
    \right)
    - E \dot t
 \nonumber   \\
  &= \tfrac12 \Tr 
    \left(
    J^T R^T \dot R
    \right)
  + (R\db + \tfrac12 q \x) \cdot \dot\bv
  + (R\p -\tfrac12 q \bv)\cdot \dot\x
    + q \dot \Phi
    - E \dot t
\end{align}
where the canonical pair $\left(q,\Phi\right)$ represents the total
electric charge $q$ and its canonical conjugate $\Phi$.

\subsection{Fractonic monopole ($q\protect\neq0$).}
\label{sec:fract-monop}

Let us consider a fractonic system with non-vanishing total electric
charge. This means an element of the dual space of the form
$\left(\boldsymbol{j},\boldsymbol{d},\boldsymbol{p},E,q\right)$, where
everything vanishes except $E=E_{0}$ and $q=e$, for constants
$E_{0}\in\mathbb{R}$ and $e\neq0$ (this corresponds to massive
spinless carrollions).

The particle lagrangian associated to this orbit can be directly
obtained from the results in Section~\ref{sec:action-1}. For fractons
this case can be physically interpreted as the description of the
dynamics of an elementary monopole carrying a total electric charge of
magnitude $e$.
The lagrangian in canonical form can be written as 
\begin{equation}
  \label{eq:LFracton_monopole}
  L_{\text{can}}[\x,q,t,\bm{\pi},\Phi, E, N,\eta]=-E\dot{t}+\boldsymbol{\pi}\cdot\dot{\boldsymbol{x}}+\Phi\dot{q}-N\left(E-E_{0}\right)-\eta\left(q-e\right).
\end{equation}
While the Lagrange multipliers $N$ and $\eta$ enforce that the
energy is $E_{0}$ and the charge $e$, the variation with respect to
$\bm{\pi}$ leads to
\begin{align}
  \label{eq:xdot-zero}
  \dot{\boldsymbol{x}}=\zerob \, . 
\end{align}
Therefore we recover the characteristic feature that a single
fractonic monopole cannot move, which is the now a consequence of the
action~\eqref{eq:LFracton_monopole}.

The non-vanishing conserved charges that realise the fractonic algebra
as Poisson brackets are given by the charges
\begin{align}
  \boldsymbol{j}_{Q}&=\boldsymbol{x}\times\boldsymbol{\pi} &
  \boldsymbol{d}_{Q}&=e\boldsymbol{x} &
  \boldsymbol{p}_{Q} &=\boldsymbol{\pi} &
  q_{Q}&=e & 
  E_{Q}&=E_{0}                                                
\end{align}
where $\boldsymbol{j}_{Q}$ is the total angular momentum,
$\boldsymbol{p}_{Q}$ the linear momentum, $q_{Q}$ the total electric
charge, $\boldsymbol{d}_{Q}$ the dipole charge and $E_{Q}$ the
fractonic energy. As expected and due to~\eqref{eq:xdot-zero} the
dipole charge is indeed conserved $\dot \db = \zerob$. We also recover
the characteristic commutation relations
\begin{align}
  \label{eq:dp-can}
  \{ \db_{Q}, \p_{Q}\} = q_{Q}\bm{1} \, .
\end{align}

The lagrangian defined on the reduced phase space can be derived by
solving the constraints and imposing the gauge fixing $t=\tau$.
Neglecting boundary terms one finds
\begin{equation}
  \label{eq:LredFracton_monopole}
  L_{\text{red}}[\x,\bm{\pi}]=\boldsymbol{\pi}\cdot\dot{\boldsymbol{x}}-E_{0} = \frac{1}{e}\pb_{Q}\cdot \dot \db_{Q}-E_{0}
\end{equation}
which again manifests the canonical relation between linear momentum
and dipole moment~\eqref{eq:dp-can} and shows the equivalence to the
massless carrollion upon correspondence.

\subsection{\texorpdfstring{Fractonic dipole ($q=0$, $d \protect\neq0$, $\p=\bzero$)}{Fractonic dipole (q=0, d neq 0,p=0)}}
\label{sec:fracton-dipole-ele}

In analogy to the electromagnetic theory, an elementary dipole has a
vanishing electric charge, and a nonzero dipole moment. Consequently,
this fractonic excitation can be described by massless carrollions
(with zero helicity for simplicity, see Table~\ref{tab:carrollions}).

The lagrangian can be written as
\begin{equation}
  \label{eq:L_fracton_8_reduced-1-2}
  L_{\mathrm{can}}[\db,t,\vb,E,N,\eta]=-E\dot{t}+\db\cdot\dot{\vb}-N\left(E-E_{0}\right)-\eta\left(\|\db\|^{2}-d^{2}\right).
\end{equation}
Here $\db$ is the dipole moment, and can be considered as a
fundamental degree of freedom which is varied in the action. The
constraint enforced by $\eta$ tells us that the constant $d$
(which is not varied) fixes the magnitude of the dipole moment.
Solving the constraint $E-E_{0}$ and imposing the gauge condition
$t=\tau$, the lagrangian can be rewritten as
\begin{equation}
  \label{eq:L_fracton_8_reduced-1-2-1}
  L_{\mathrm{can}}[\db,\vb,\eta]=\db\cdot\dot{\vb}-E_{0}-\eta\left(\|\db\|^{2}-d^{2}\right).
\end{equation}
The position of the dipole is not a dynamical variable in the
lagrangian due to the vanishing of its conjugate momentum, the linear
momentum. Consequently, the action does not specify the position of
the dipole in space, and as a result, there are no mobility
restrictions for this pure dipole. This action should be compared with
the massless carrollions in Section~\ref{sec:massless-with-spin} to
which the correspond to.

To make this even more manifest it is useful to remember that the
original action indeed had a dependence on $\xb$ which dropped out
since we looked at orbits with $\pb=\zerob$. If we do not integrate it
out, as we did earlier, but calculate its canonical momentum
$\bm{\pi}=\pd L / \pd \dot\xb=0 $ we find the constraint that the
canonical momentum vanishes. The action is then given by
\begin{equation}
  \label{eq:L_dipolewithx}
  L_{\mathrm{can}}[\db,\xb, \vb, \bm{\pi},\eta,\bm{u}]=\db\cdot\dot{\vb} + \bm{\pi} \cdot \dot \x -E_{0}-\eta\left(\|\db \|^{2}-d^{2}\right)  -\bm{u} \cdot \bm{\pi}
\end{equation}
where $\bm{u}$ are Lagrange multipliers that enforce the constraint
that $\bm{\pi}$ vanishes. The corresponding gauge transformations tell
us that $\xb$ is arbitrary, i.e., the position of the dipoles is not
restricted. We can also see this from the equation of motion coming
from the variation of $\bm{\pi}$ which leads to
\begin{align}
  \dot \x = \bm{u}
\end{align}
where the Lagrange multiplier $\bm{u}$ is an arbitrary function of
time. So again there is no restriction on the position, in drastic
contradistinction to the monopoles.

\subsection{Generic fractonic dipole ($q=0$, $d\protect\neq0$,
  $p\protect\neq0$)}
\label{sec:gener-fract-dipole}

Let us consider the lagrangian associated with generic dipoles
(corresponding to generic massless carrollions in
Table~\ref{tab:carrollions})
\begin{align}
\label{eq:Icangeneral-2-1}  
L_{\text{can}}=&-E\dot{t}+\p\cdot\dot{\boldsymbol{x}}+\db\cdot\dot{\vb}+\Phi\dot{q}-N\left(E-E_{0}\right)-\eta_{0}q-\eta_{1}\left(\|\p\|^{2}-p^{2}\right) \nonumber \\
&-\eta_{2}\left(\|\db\|^{2}-d^{2}\right)
-\eta_{3}\left(\p\cdot\db-pd\cos\theta\right) \, ,
\end{align}
where $L_{\mathrm{can}}[\vb,\xb,q,t,\db,\p,\Phi,E,N,\eta_{i}]$.

The total electric charge vanishes for this orbit. Therefore, this
case can be interpreted as a dipole $\db$ with an additional degree of
freedom that describes its position in space $\x$.

The non-vanishing charges of the dipole algebra are $\p_{Q} =\p$,
$\db_{Q}=\db$ together with
\begin{align}
\boldsymbol{j}_{Q}&=\boldsymbol{x}\times\p+\vb\times\db &  E_{Q}&=E_{0} \, .
\end{align}
After solving the trivial constraints $q=0$, $E-E_{0}=0$, and imposing
the gauge fixing condition $t=\tau$, the lagrangian can be rewritten
in the following form
\begin{equation}
  \label{eq:Icangeneral-2-1-1}
  L_{\mathrm{red}}=\p\cdot\dot{\boldsymbol{x}}+\db\cdot\dot{\vb}-E_{0}
  -\eta_{1}\left(\|\p\|^{2}-p^{2}\right)-\eta_{2}\left(\|\db\|^{2}-d^{2}\right)-\eta_{3}\left(\p\cdot\db-p d\cos\theta\right).
\end{equation}
where $L_{\mathrm{red}}[\vb,\xb,\db,\p,N,\eta_{i}]$.

Before we discuss solutions of this model, let us provide a physical
interpretation of the action \eqref{eq:Icangeneral-2-1-1}. To that end
it is useful to construct a dipole by considering two monopoles with
opposite charges separated by a small distance. This precisely
corresponds to the notion of an ideal dipole in electrodynamics. This
system, by definition, would not be considered an elementary object,
but rather a composite one. However, once the constraints
\begin{align}
  \label{eq:const8fractons}
 \|\p\|^{2} &= p^{2}& \|\db\|^{2} &=d^{2} & \p\cdot\db&=p d\cos\theta 
\end{align}
are imposed, the system can be regarded as a fundamental one.

\subsection{Dipoles from two monopoles}
\label{sec:dipoles-from-two}

To obtain a clear physical interpretation of the above systems it is
instructive to construct a dipole in terms of two
monopoles.\footnote{In the carrollian case, a similar construction can
  be performed, although instead of a dipole created from two
  monopoles of opposite charges, one must instead consider a
  particle/antiparticle pair with energies of the same magnitude but
  opposite sign.} Thus, let us consider two monopoles with opposite
charges of equal magnitude $e$ and $-e$. The action for the generic
dipole is constructed by taking the sum of the actions for each
individual monopole and subsequently imposing the
constraints~\eqref{eq:const8fractons}.

As a starting point, let us consider the lagrangian
\eqref{eq:LFracton_monopole}, describing two monopoles
\begin{align}
  L[\xb_{i},\bm{\pi}_{i}]=\boldsymbol{\pi}_{1}\cdot\dot{\boldsymbol{x}}_{1}+\boldsymbol{\pi}_{2}\cdot\dot{\boldsymbol{x}}_{2}-\left(E_{1}+E_{2}\right).
\end{align}
The total dipole moment, linear momentum and energy of the system are
given by
\begin{align}
\label{eq:charges_8_Fracton}  
  \db&=e\left(\boldsymbol{x}_{1}-\boldsymbol{x}_{2}\right)&
  & \pb=\boldsymbol{\pi}_{1}+\boldsymbol{\pi}_{2} &
    E&=E_{1}+E_{2}.
\end{align}
As discussed previously, this system cannot be considered as
fundamental without imposing the
constraints~\eqref{eq:const8fractons}. This system is at this stage
not elementary; that is, it is analogous to a reducible, rather than
to an irreducible, representation.

Introducing the following quantities
\begin{align}
  \boldsymbol{x}&:=\frac{1}{2}\left(\boldsymbol{x}_{1}+\boldsymbol{x}_{2}\right) &
  \vb&:=\frac{1}{2e}\left(\boldsymbol{\pi}_{2}-\boldsymbol{\pi}_{1}\right),  
\end{align}
and imposing the constraints (\ref{eq:const8fractons}), we obtain
precisely the lagrangian of the generic fractonic
dipole~\eqref{eq:Icangeneral-2-1-1}. The degrees of freedom of this
system are characterised by the position $\boldsymbol{x}$ of the
dipole, which is given by the average of the positions of each
monopole, and the total dipole moment $\db$, as well as their
respective canonical conjugate momenta $\pb$ and $\vb$. These
variables are subject to the constraints in \eqref{eq:const8fractons}.
Out of two elementary monopoles, which together are reducible, we have
thus created an elementary fundamental system, the generic fractonic
dipole.

The equations of motion imply the conservation of the momentum and
dipole moment, $\dot\pb=\dot{\db}=\zerob$. In
addition, one has
\begin{align}
  \label{eq:EOM_Frac_8} 
  \dot{\x}&=2\eta_{1}\p+\eta_{3}\db &
   \dot{\vb}&=2\eta_{2}\db+\eta_{3}\p \, . 
\end{align}
In analogy with the gauge fixing used in~\cite{deBoer:2021jej}, let us impose the following conditions:
\begin{align}
  \|\dot{\boldsymbol{x}}\|-p&=0 & \|\dot{\vb}\|-d&=0 \, .
\end{align}
The equations of motion lead to $\eta_{1}=\eta_{2}=1/2$ and $\eta_{3}=0$.
Therefore, they can be integrated as follows
\begin{align}
  \label{eq:dipole-inte}
  \boldsymbol{x}\left(t\right)=\p t+\boldsymbol{x}_{0}\,,\qquad\vb\left(t\right)=\db t +\vb_{0}.
\end{align}
Again, it is evident that the elementary dipole is not constrained to
remain static, in contrast to the fractonic monopole. However, in
contradistinction to the dipole of
Section~\ref{sec:fracton-dipole-ele}, it can be seen
from~\eqref{eq:dipole-inte} that the spatial evolution is not
completely undetermined anymore.

This realisation of the generic dipole in terms of two monopoles also
elucidates some of the particles we have already discussed.

\subsubsection{Dipole moment as a degree of freedom}
\label{sec:derivation-case-5}

Let us consider the particular case when $\p=\zerob$. According to
\eqref{eq:charges_8_Fracton}, this implies that
$\boldsymbol{\pi}_{1}=-\boldsymbol{\pi}_{2}$. In turn, this implies
that the generic fractonic dipole~\eqref{eq:Icangeneral-2-1-1} reduces
to \eqref{eq:L_fracton_8_reduced-1-2}, which is the action that
describes exclusively the degree of freedom that is associated with
the dipole moment.

\subsubsection{Neutral carrollian particle from dipoles}
\label{sec:derivation-case-6}

In the case when $q=0$ and $\db=\zerob$, from
\eqref{eq:charges_8_Fracton} one finds
$\boldsymbol{x}=\boldsymbol{x}_{1}=\boldsymbol{x}_{2}$. Therefore,
both monopoles are located at exactly the same position. This should
not be confused with the definition of an elementary dipole, where the
separation between the monopoles approaches zero while the electric
charge tends to infinity, resulting in a finite dipole moment
$\db \neq \zerob$.  In this case, the electric charge of each monopole
is opposite but finite, resulting in an elementary particle that is
electrically neutral and described by the following lagrangian
\begin{equation}
  \label{eq:L_fracton_8_reduced-2}
  L[\x,\pb,\eta_{1}]=\p\cdot\dot{\boldsymbol{x}}-E_{0}-\eta_{1}\left(\|\p\|^{2}-p^{2}\right).
\end{equation}
The equations of motion derived from this lagrangian are equivalent to
those of a massless carrollian particle with zero helicity which also
has vanishing total electric charge.

Fundamentally this relation is rooted in the fact that the
distinguishing feature, centre-of-mass and dipole charge, vanish. The
Carroll and fracton particles are not just dual to each other, but in
this case even physically equivalent and given by an aristotelion as
already discussed in Section~\ref{sec:massless-parallel}.

\section{Field theories and generalisation to curved space}
\label{sec:field-theor-gener}

In this section we provide some remarks concerning the relation to
known ``electric'' and ``magnetic'' carrollian field theories, the
difference between Carroll boost and dipole symmetry and comment on
the generalisation of the Carroll/fracton correspondence to curved space.

\subsection{Massive and massless Carroll field theories}
\label{sec:electr-magn-carr}

Given our understanding of the elementary particles it is natural to
ask if there are field theory realisations for which they can be seen
as excitations. We will elaborate this point in detail in our future
work~\cite{Figueroa-OFarrill:2023qty}, but let us nevertheless provide some remarks. To
that end we restrict for simplicity to massive spin zero and massless
carrollions with zero helicity.

Using Dirac quantisation, see, e.g.,~\cite[§13]{Henneaux:1992ig}, for
the constraint $E-E_{0}=0$ and using $E \mapsto i \pd_{t}$ we obtain
the following equation for a massive carrollian spin $0$ field
$\phi(t,\xb)$
\begin{align}
(i \pd_{t}-E_{0})\phi = 0 \, .
\end{align}
This equation can be derived from an action of the form
\begin{align}
  \label{eq:field-massive}
  S_{E_{0}\neq 0}[\phi,\phi^{*}]= \int dt d^{3}x \left( i \phi^{*}\dot  \phi -E_{0} \phi^{*} \phi \right) \, .
\end{align}
If we want to consider particles and antiparticles at once (in which
case we can for simplicity restrict to a real scalar) we obtain
\begin{align}
( \pd_{t}^{2} + E_{0}^{2})\phi = 0 \, ,
\end{align}
which we can derive from
\begin{align}
  \label{eq:field-massive-both}
  S_{E_{0}^{2}\neq 0}[\phi]=\frac{1}{2}\int dt d^{3}x \left( \dot \phi^{2} -E_{0}^{2} \phi^{2} \right) \, .
\end{align}
This action agrees with the ultralocal or ``electric'' field theories
considered
in~\cite{Klauder:1970cs,Bagchi:2019xfx,Henneaux:2021yzg,deBoer:2021jej}.
Similarly to the particles, see Section~\ref{sec:infin-symm-mass},
these actions also admit a symmetry enhancement and are not only
invariant under linear Carroll boosts but under
$\delta\phi = - f(\x)\pd_{t}\phi$ where $f(\x)$ is a free function
(these symmetries were called spacetime subsymmetries
in~\cite{Baig:2023yaz}).

For the massless carrollion with helicity $h=0$ we obtain using the
constraints $E=0$ and $\|\bm{p} \|^{2}-p^{2}=0$ the following equations
\begin{align}
  \label{eq:EOM-massless}
  \pd_{t}\phi&=0 & (\pd_{i}\pd^{i} + p^{2}) \phi=0 \, .
\end{align}
Let us contrast these equations with what is sometimes called
``magnetic'' carrollian theory in the
literature~\cite{Henneaux:2021yzg,deBoer:2021jej} (similar actions
have also appeared in the context of flat space holography and
deformations in lower
dimensions~\cite{Barnich:2012aw,Rodriguez:2021tcz})
\begin{align}
  S_{\mathrm{magnetic}}[\phi,\pi]
  = \int d t d^{3}x 
  \left(
  \pi\dot \phi - \frac{1}{2}\pd_{i}\phi \pd^{i} \phi + \frac{1}{2} p^{2}\phi^{2}
  \right)  \, .
\end{align}
The variation of $\pi$ indeed leads to the first equation
in~\eqref{eq:EOM-massless}, the variation of $\phi$ however provides
\begin{align}
  (\pd_{i}\pd^{i} + p^{2}) \phi=\dot \pi \, ,
\end{align}
which has a source term with respect to~\eqref{eq:EOM-massless}. One
could try to remedy this by considering an action of the form
\begin{align}
  S_{E_{0}=0}[\phi,\pi,u]
  = \int d t d^{3}x 
  \left(
  \pi\dot \phi - u (\Delta + p^{2})\phi
  \right)  \, .
\end{align}
which leads indeed to the equations~\eqref{eq:EOM-massless}, we leave
however further explorations to our future work~\cite{Figueroa-OFarrill:2023qty}.

\subsection{Carroll boost versus dipole symmetry for field theories}
\label{sec:dipole-vers-carr}

Let us provide some cautionary remarks concerning the Carroll/fracton
correspondence, for simplicity we restrict to scalar fields.

Following the action of Carroll symmetry on spacetime
\eqref{eq:carroll-trans} the Carroll boosts act as
\begin{align}
  \phi(t,\x) \mapsto \phi(t-\vb_{C}\cdot \x,\x)
\end{align}
on complex scalar fields, while linear dipole transformations act
as~\cite{Pretko:2018jbi}
\begin{align}
  \phi(t,\x) \mapsto e^{i \vb_{F}\cdot \x}\phi(t,\x) \, .
\end{align}
In this case it is clear that these symmetries are inequivalent, while
Carroll boosts are spacetime symmetries the dipole symmetries are
internal symmetries and do not act on the geometry.

There also exist theories that admit, both, either or none of these
symmetries, e.g., let us consider the complex scalar $\phi(t,\xb)$
theory~\cite{Pretko:2018jbi}
\begin{align}
  \label{eq:frac-field}
 S[\phi] = \int dt d^{3}x 
  \left(
  \dot \phi \dot \phi^{*} - \lambda X_{ij}X^{*}_{ij}
  \right)
\end{align}
where $X_{ij}= \pd_{i} \phi \pd_{j} \phi - \phi \pd_{i}\pd_{j}\phi$.
For $\lambda=0$ this theory has Carroll boost symmetry
$\delta\phi = -(\vb_{C} \cdot \x)\pd_{t}\phi$ and dipole symmetry
$\delta\phi = i (\vb_{F} \cdot \x)\phi$.\footnote{For $\lambda=0$ this
  theory has even more ``supertranslation-like'' symmetries (see,
  e.g., \cite[Section 2.4 and 2.6]{Bidussi:2021nmp}), but this is not
  important for the argument.} When $\lambda \neq 0$ only the dipole
symmetry remains and the theory has no Carroll boost symmetry. On the
other hand the real scalar field $L[\phi] = \frac{1}{2}\dot \phi^{2}$
has only Carroll boost symmetry and when the gradient term
$\frac{1}{2}\pd_{i}\phi\pd^{i}\phi$ is added it has neither.

From this perspective dipole symmetry is an internal symmetry, similar
to, e.g., internal spin degrees of freedom or internal $SU(n)$
symmetries. In particular for models of the type~\eqref{eq:frac-field}
dipole conservation is not related to a spacetime symmetry and is
therefore different from Carroll boosts~\cite{Bidussi:2021nmp}. For
this reason it was argued in~\cite{Bidussi:2021nmp,Jain:2021ibh} that
the geometry to which fracton theories of the type~\eqref{eq:frac-field} are coupled
is aristotelian, and therefore it does not admit boosts. See
also~\cite{Perez:2022kax} for the (asymptotic) analysis of the gauge
theory sector which also finds aristotelian symmetries.

What gives rise to the correspondence on the level of the particle is the
fact that coadjoint orbits and therefore the intrinsic definition of
elementary particles are based on the group structure and not
necessarily the underlying spacetime geometry. However it would be
interesting to see if there is more to be learned about this correspondence
between internal and external symmetries, see, e.g., the interesting
recent works~\cite{Baig:2023yaz,Kasikci:2023tvs,Huang:2023zhp}.

\subsection{(A)dS Carroll and fractons on curved space }
\label{sec:ads-carroll-fractons}

Our discussions so far were focused on flat Carroll space, but we
would like to mention a possible generalisation to curved space. More
precisely to (A)dS Carroll which can be thought of the carrollian
analogs of (anti-)de Sitter space~\cite{Figueroa-OFarrill:2018ilb}
from which they arise as a limit~\cite{Bacry:1968zf}.

On the level of the symmetries this means that the flat Carroll
symmetries~\eqref{eq:carroll-kla-brackets} have the following
additional commutation relations
\begin{align}
  \label{eq:AdS-carroll}
  [P_{a},P_{b}]&=-\Lambda J_{ab} & [P_{a},H] = \Lambda B_{a}
\end{align}
where the cosmological constant $\Lambda < 0$ leads to AdS Carroll and
$\Lambda > 0$ to dS Carroll. They share similarities with their
lorentzian counter parts and are therefore interesting candidates for
holography and cosmology. But rather than lorentzian they also have
carrollian boosts. Another interesting property of AdS Carroll is that
upon the exchange of boosts and translations the symmetries are
isomorphic to Poincaré symmetries. This means that for this case the
particles should be describable in terms of well known Poincaré
particles. Furthermore, AdS Carroll is closely related to time-like
infinity of asymptotically flat
spacetimes~\cite{Figueroa-OFarrill:2021sxz}.

Using our correspondence means that the dipole algebra~\eqref{eq:frac-alg}
obtains the following additional commutation relations
\begin{align}
  \label{eq:AdS-fracton}
  [P_{a},P_{b}]&=-\Lambda J_{ab} & [P_{a},Q] = \Lambda D_{a} \, .
\end{align}
The first commutation relation implies that they are now living on
hyperbolic space ($\Lambda <0$) or on a sphere ($\Lambda > 0$). The
second commutation relation implies that the charge is no longer
central and that it is related to the dipole moment via the Casimir
$\Lambda Q^{2} +\| \bm{D} \|$. For $\Lambda <0$ we see an emergent
Lorentz symmetry, of course related to the underlying Poincaré
symmetry. It would be interesting to further explore the
Carroll/fracton correspondence in these curved spaces.

\section{Discussion and outlook}
\label{sec:discussion-outlook}

This work provides a definition and classification of classical
Carroll particles and fractons in $3+1$ dimensions, summarised in
Table~\ref{tab:carrollions}. Based on the known relation between
Carroll and dipole symmetries and their free
theories~\cite{Bidussi:2021nmp} (see also~\cite{Marsot:2022imf}) we
propose a correspondence on the level of the elementary particles, which is
summarised in Table~\ref{tab:carrvsfrac}, and show that while their
physical interpretations differ they are indeed equivalent (at least
classically on the reduced phase space).

The Carroll/fracton correspondence is indeed useful to obtain physical
insights. For instance, isolated massive carrollions are stuck to a
point due to the conservation of the center-of-mass charge for the
very same reason that fracton monopoles are stuck to a point due to dipole
conservation.  A carrollian way to think about this property would be
to think about the closing of the Minkowskian light cone which also
implies immobility in space. On the other hand it might be useful to
think about massless carrollions as moving dipoles.

Given that both of these subjects connect to various interesting areas
of current research it is clear that many things can be said. Let us
now relate our results to various other interesting topics and provide
some areas for future exploration.
\begin{description}[style=nextline]
\item[Quantum Carroll/fracton particles] It is well-known that
  coadjoint
  orbits~\cite{MR1461545,10.1007/BFb0079068,Kirillov2004Lectures} and
  particles actions~\cite{Alekseev:1988vx} provide a fruitful starting
  point for quantisation. In a future work~\cite{Figueroa-OFarrill:2023qty} we will look
  at the quantum particles to which this correspondence generalises.

\item[Field theories] It is natural to try to systematically connect
  these particles to field theories on Carroll spacetime. We will
  show~\cite{Figueroa-OFarrill:2023qty} that of the unitary irreducible representations of
  the Carroll group, there are two classes of representations which
  can be realised as (finite-component) fields on Carroll spacetime.
  The first class are the massive carrollions (except that the spin is
  quantised) and the second class as the massless carrollions with
  helicity (which is also quantised). The former are related to
  electric field theories, whereas the latter to magnetic field
  theories, as discussed in Section~\ref{sec:electr-magn-carr}.

\item[Relation to time-like symmetries] In~\cite{Gorantla:2022eem}
  fractons were described using ``time-like'' higher-form global
  symmetries. It might be interesting to understand the relation
  between these generalised symmetries and our results.

\item[Planons, lineons and other exotic particles] This work
  highlights the applicability of the orbit method beyond the
  conventional framework and we will show that this also generalises
  to other exotic particles with restricted mobility~\cite{IP:2023}.
  
  In particular the symmetries of planons are isomorphic to the
  Bargmann (=centrally extended Galilei)
  algebra~\cite{Gromov:2018nbv}. Using the methods described in this
  work we can relate the respective particles, e.g., planons are
  related to massless galilean particles~\cite{IP:2023}.

  The worldline description has also been applied to fractonic
  theories with subsystem symmetries~\cite{Casalbuoni:2021fel}.

\item[Other dimensions] Much of what has been said should be
  generalisable to generic dimension. Let us however remark that in
  $2+1$ dimensions there are nontrivial central extensions which would
  make this correspondence more involved, cf.,~\cite{Marsot:2021tvq}.

\item[Fractons in flat holography and black holes] Given that
  carrollian symmetries have emerged in flat holography~(see, e.g.,
  \cite{Duval:2014uva,Figueroa-OFarrill:2021sxz,Donnay:2022aba,Bagchi:2022emh,Bekaert:2022oeh,Saha:2023hsl,Salzer:2023jqv})
  and for black holes (see, e.g.,
  \cite{Donnay:2019jiz,Marsot:2022imf}) it is intriguing to try
  understand them from a fractonic perspective, see also our comments
  in Section~\ref{sec:ads-carroll-fractons}.
\end{description}

\acknowledgments

We thank Glenn Barnich, Andrea Campoleoni, Jelle Hartong, Emil Have, Simon Pekar and Ali Seraj 
for useful discussions. The research of AP is partially supported by Fondecyt grants No
1211226, 1220910 and 1230853. SP is supported by the Leverhulme Trust Research
Project Grant (RPG-2019-218) ``What is Non-Relativistic Quantum
Gravity and is it Holographic?''.

\appendix

\section{Carroll symmetry}
\label{app:carroll-symmetry}

In this appendix we define the Caroll algebra, the Carroll group,
discuss the adjoint and coadjoint actions, automorphisms and their
effect on coadjoint orbits and the Maurer--Cartan one-forms.  We do
this in general dimension before specialising to dimension $3+1$.

\subsection{The Carroll group and its Lie algebra}
\label{sec:carroll-group}

The Carroll group acts via affine transformations on Carroll
spacetime, the ultra-relativistic limit of Minkowski spacetime
\cite{MR0192900,SenGupta1966OnAA}. The $(n+1)$-dimensional Carroll
algebra $\g$, by which we mean the Carroll algebra acting on
$(n+1)$-dimensional Carroll spacetime, is spanned by
$J_{ab}=-J_{ba}, B_a, P_a, H$, with $a,b=1,\dots,n$, with nonzero Lie
brackets
\begin{equation}
  \label{eq:carroll-kla-brackets}
  \begin{split}
    [J_{ab},J_{cd}] &= \delta_{bc} J_{ad} - \delta_{ac} J_{bd} -  \delta_{bd} J_{ac} + \delta_{bd} J_{ac} \\
    [J_{ab}, B_c] &= \delta_{bc} B_a - \delta_{ac} B_b\\
    [J_{ab}, P_c] &= \delta_{bc} P_a - \delta_{ac} P_b\\
    [B_a, P_b] &= \delta_{ab} H.
  \end{split}
\end{equation}
We may embed the Carroll Lie algebra $\g$ in $\gl(n+2,\RR)$ as
follows:
\begin{equation}
  \label{eq:affine-embedding}
  \tfrac12 X^{ab} J_{ab} + v^a B_a + a^a P_a + s H \mapsto
  \begin{pmatrix}
    X & \bzero & \ba \\
    \bv^T & 0 & s\\
    \bzero^T & 0 & 0
  \end{pmatrix},
\end{equation}
where $X^T = - X \in \so(n)$. We may parametrise the (connected)
Carroll group as follows:
\begin{equation}
\label{eq:grpara}
  g(R,\bv,\ba,s) = \exp(sH) \exp(v^aB_a + a^a P_a) R \, ,
\end{equation}
where $R \in \SO(n)$.  This non-standard parametrisation of the
Carroll group has the advantage that it puts $B_{a}$ and $P_{a}$ on
equal footing, reflecting the fact that they can be mapped into each
other under automorphisms, as we discuss in
Section~\ref{sec:automorphisms}. This leads to more symmetric
equations that make this symmetry manifest, at the cost that some
equations are more complicated.  We discuss in the next
subsection~\ref{sec:symm-brok-param} parametrisations which are more
economical for other aspects, e.g., when acting on the spacetime.

The resulting group is seen to be the subgroup of $\GL(n+2,\RR)$
consisting of matrices of the form
\begin{equation}
  \label{eq:carroll-group-element}
  \begin{pmatrix}
    R & \bzero & \ba \\
    \bv^TR & 1 & s + \tfrac12 \bv^T\ba\\
    \bzero^T & 0 & 1
  \end{pmatrix},
\end{equation}
with $R \in \SO(n)$, from where we can work out the group
multiplication
\begin{equation}
  \label{eq:group-multiplication}
    \begin{pmatrix}
    R_1 & \bzero & \ba_1 \\
    \bv_1^TR_1 & 1 & s_1 + \tfrac12 \bv_1^T\ba_1\\
    \bzero^T & 0 & 1
  \end{pmatrix}
  \begin{pmatrix}
    R_2 & \bzero & \ba_2 \\
    \bv_2^TR_2 & 1 & s_2 + \tfrac12 \bv_2^T\ba_2\\
    \bzero^T & 0 & 1
  \end{pmatrix}
  =
  \begin{pmatrix}
    R_3 & \bzero & \ba_3 \\
    \bv_3^TR_3 & 1 & s_3 + \tfrac12 \bv_3^T\ba_3\\
    \bzero^T & 0 & 1
  \end{pmatrix}
\end{equation}
where
\begin{equation}
  \begin{split}
    R_3 &= R_1 R_2\\
    \bv_3 &= \bv_1 + R_1 \bv_2\\
    \ba_3 &= \ba_1 + R_1 \ba_2\\
    s_3 &= s_1 + s_2 + \tfrac12 \bv_1^T R_1 \ba_2 - \tfrac12 \ba_1^T R_1 \bv_2.
  \end{split}
\end{equation}
It is then straightforward to work out the inverse of the generic element:
\begin{equation}
  \label{eq:inverse}
    \begin{pmatrix}
    R & \bzero & \ba \\
    \bv^TR & 1 & s + \tfrac12 \bv^T\ba\\
    \bzero^T & 0 & 1
  \end{pmatrix}^{-1} =
  \begin{pmatrix}
    R^T & \bzero & - R^T \ba\\
    -\bv^T & 1 & -s + \tfrac12 \ba^T R \bv\\
    \bzero^T & 0 & 1
  \end{pmatrix},
\end{equation}
where we have used that $R^T R = \mathbb{1}$.

Finally, identifying Carroll spacetime with the affine hyperplane of
$\RR^{n+2}$ consisting of those vectors whose last entry is equal to
$1$, we work out the action of the Carroll group on Carroll spacetime:
\begin{equation}
  \label{eq:spacetimeaction}
    \begin{pmatrix}
    R & \bzero & \ba \\
    \bv^TR & 1 & s + \tfrac12 \bv^T\ba\\
    \bzero^T & 0 & 1
  \end{pmatrix}
  \begin{pmatrix}
    \x \\ t \\ 1
  \end{pmatrix} =
  \begin{pmatrix}
    R \x + \ba \\ t + s + \tfrac12 \bv^T\ba + \bv^T R \x \\ 1
  \end{pmatrix}
\end{equation}
from where we see that the action (in this non-standard
parametrisation) of the Carroll group consists of a rotation
$(\x,t) \mapsto (R \x,t)$, followed by a Carroll boost
$(R\x,t) \mapsto (R\x, t + \bv^T R \x)$ followed in turn by a
translation
$(R\x, t + \bv^T R\x) \mapsto (R\x + \ba, t + \bv^T R\x + s + \tfrac12
\bv^T \ba)$. In the next subsection we will discuss a parametrisation
of the group element that is more economical.

\subsubsection{Non-symmetric group parametrisations}
\label{sec:symm-brok-param}

When we do not insist on a parametrisation that puts the two vectors
$B_{a}$ and $P_{a}$ on equal footing there are other useful choices,
in particular when we are interested in the group action on the
spacetime. As a first step we parametrise the group element as
\begin{align}
  \label{eq:carroll-group-par-broken}
  g(R,\bv,\ba,s') = \exp(s'H) \exp(a^a P_a) \exp(v^a B_a) R \, .
\end{align}
Using
$e^{a^{a} P_{a} + v^{a}B_{a}} = e^{\frac{1}{2} \bv \cdot \ba H}
e^{a^{a} P_{a}} e^{v^{a} B_{a}}$ we can relate this parametrisation to
the one above~\eqref{eq:grpara} by $s'= s + \tfrac12 \bv\cdot\ba $
which leads to the matrix representation
\begin{equation}
  \label{eq:carroll-group-element-broken}
  \begin{pmatrix}
    R& \bzero & \ba \\
    \bv^TR & 1 & s' \\
    \bzero^T & 0 & 1
  \end{pmatrix} \, .
\end{equation}

When our main concern is the action on the underlying spacetime the
following parametrisation is particularly useful
\begin{align}
  \label{eq:carroll-group-para-spacetime}
  g( R,\bv',\ba,s') = \exp(s'H) \exp(a^a P_a) R \exp(v'^a B_a)  \, .
\end{align}
Using
$R^{-1} e^{\bm{v} \cdot \bm{B}} R = e^{(R^{-1}\vb) \cdot \bm{B}}$ it
can be related to \eqref{eq:carroll-group-par-broken}
via~$\bv= R^{-1}\bv'$ and consequently we can write the group element
as
\begin{equation}
  \label{eq:carroll-group-element-space}
  \begin{pmatrix}
    R & \bzero & \ba \\
    \vb'^{T} & 1 & s' \\
    \bzero^T & 0 & 1
  \end{pmatrix} \, .
\end{equation}
The group law may look unconventional
\begin{equation}
  \begin{split}
    R_3 &= R_1 R_2\\
    \bv'_3 &= R_{2}^{-1}\bv'_1 + \bv'_2\\
    \ba_3 &= \ba_1 + R_1 \ba_2\\
    s'_3 &= s'_1 + s'_2 + \bv'_{1} \cdot \ab_{2} \, ,
  \end{split}
\end{equation}
but acting with the group
element~\eqref{eq:carroll-group-element-space} on the spacetime, as
in~\eqref{eq:spacetimeaction}, leads precisely to the simple
transformation of the spacetime given in~\eqref{eq:carroll-trans}
(where we have dropped the primes).

\subsection{Automorphisms}
\label{sec:automorphisms}

The group of automorphisms of the Carroll Lie algebra (for any $n\geq 3$) which
fix the rotational subalgebra is isomorphic to $\GL(2,\RR)$, with
$\begin{pmatrix}a & b \\ c & d\end{pmatrix} \in \GL(2,\RR)$ acting as
\begin{equation}
  \label{eq:autos-carroll}
  J_{ab} \mapsto J_{ab}, \quad B_a \mapsto a B_a + b P_a, \quad P_a \mapsto
  c B_a + d P_a \quad\text{and}\quad H \mapsto (ad - bc) H.
\end{equation}
These automorphisms are not inner: they do not arise by conjugation in
the Carroll group.  These automorphisms act on the dual $\g^*$ of the
Lie algebra as follows.  If we let $\lambda^{ab}, \beta^a, \pi^a,
\eta$ denote the canonical dual basis to $J_{ab}, B_a, P_a, H$, we see
that
\begin{equation}
  \lambda^{ab} \mapsto \lambda^{ab}, \quad \beta^a \mapsto
  \tfrac1{ad-bc} (d \beta^a - c \pi^a),\quad \pi^a \mapsto
  \tfrac1{ad-bc}  (-b \beta^a + a \pi^a) \quad\text{and}\quad \eta
  \mapsto \tfrac1{ad-bc}  \eta.
\end{equation}
We may use these automorphisms to relate coadjoint orbits which
otherwise might seem unrelated. In Appendix~\ref{sec:acti-autom-coadj}
we show how group automorphisms act on coadjoint orbits and then in
Appendix~\ref{sec:orbits-mod-autos} we will see how this allows us to
simplify the classification of coadjoint orbits for $n=3$.

\subsection{The adjoint and coadjoint actions}
\label{sec:adjo-coadj-acti}

We now work out the adjoint and coadjoint actions of the (connected)
Carroll group on its Lie algebra and its dual.  Having embedded the
Carroll group inside $\GL(n+2,\RR)$, the adjoint action is simply
conjugation.

Consider the following matrix $A \in \g$,
\begin{equation}
  \label{eq:lie-algebra-matrix}
  A = \begin{pmatrix}
    X & \bzero & \ba\\
    \bb^T & 0 & c\\
    \bzero^T & 0 & 0
  \end{pmatrix}
\end{equation}
and let us conjugate by a generic group element $g \in G$, given by
\eqref{eq:carroll-group-element}, whose inverse is given by
\eqref{eq:inverse}.  We obtain
\begin{equation}
  \label{eq:adjoint-action}
  \Ad_g  A =
  \begin{pmatrix}
    R & \bzero & \ba \\
    \bv^TR & 1 & s + \tfrac12 \bv^T\ba\\
    \bzero^T & 0 & 1
  \end{pmatrix}
  \begin{pmatrix}
    X & \bzero & \ba\\
    \bb^T & 0 & c\\
    \bzero^T & 0 & 0
  \end{pmatrix}
  \begin{pmatrix}
    R^T & \bzero & - R^T \ba\\
    -\bv^T & 1 & -s + \tfrac12 \ba^T R \bv\\
    \bzero^T & 0 & 1
  \end{pmatrix}=
  \begin{pmatrix}
    X' & \bzero & \ba'\\
    \bb'^T & 0 & c'\\
    \bzero^T & 0 & 0
  \end{pmatrix},
\end{equation}
where
\begin{equation}
  \label{eq:adjoint-action-too}
  \begin{split}
    X' &= R X R^T\\
    \ba' &= R \ba - R X R^T \ba\\
    \bb' &= R \bb - R X R^T \bv\\
    c' &=  c + \bv^T R \ba - \ba^T R \bb - \bv^T R X R^T \ba.
  \end{split}
\end{equation}

We now define an inner product on $\g$ by
\begin{equation}
  \left<
    \begin{pmatrix}
      X_1 & \bzero & \ba_1\\ \bb_1^T & 0 & c_1 \\ \bzero^T & 0 & 0
    \end{pmatrix}
    ,
    \begin{pmatrix}
      X_2 & \bzero & \ba_2\\ \bb_2^T & 0 & c_2 \\ \bzero^T & 0 & 0
    \end{pmatrix}\right>
  = \tfrac12 \Tr\left( X_1^T X_2  \right) +
  \bb_1^T \bb_2 + \ba_1^T \ba_2 + c_1 c_2.
\end{equation}
In this way we may identify $\g^*$ as a vector space with $\g$ with
the dual pairing being the above inner product.  Let $\alpha \in
\g^*$, then
\begin{equation}
  \left<\Ad^*_g \alpha, A\right> = \left<\alpha, \Ad_{g^{-1}} A\right>.
\end{equation}
We may calculate $\Ad_{g^{-1}}A$ as follows
\begin{equation}
  \label{eq:ad-ginv}
  \begin{split}
    \Ad_{g^{-1}}
    \begin{pmatrix}
      X & \bzero & \ba\\
      \bb^T & 0 & c\\
      \bzero^T & 0 & 0
    \end{pmatrix}&= 
  \begin{pmatrix}
    R^T & \bzero & - R^T \ba\\
    -\bv^T & 1 & -s + \tfrac12 \ba^T R \bv\\
    \bzero^T & 0 & 1
  \end{pmatrix}
      \begin{pmatrix}
      X & \bzero & \ba\\
      \bb^T & 0 & c\\
      \bzero^T & 0 & 0
    \end{pmatrix}
  \begin{pmatrix}
    R & \bzero & \ba \\
    \bv^TR & 1 & s + \tfrac12 \bv^T\ba\\
    \bzero^T & 0 & 1
  \end{pmatrix}\\
  &=
  \begin{pmatrix}
    X' & \bzero & \ba'\\
    \bb'^T & 0 & c'\\
    \bzero^T & 0 & 0
  \end{pmatrix}~,
  \end{split}
\end{equation}
where
\begin{equation}
  \begin{split}
    X' &= R^T X R\\
    \ba' &= R^T(\ba + X \ba)\\
    \bb' &= R^T(\bb + X \bv)\\
    c' &= c + \ba^T \bb - \bv^T (\ba + X \ba).
  \end{split}
\end{equation}
Let $\alpha \in \g^*$ be given by
\begin{equation}\label{eq:generic-covector}
  \alpha =
  \begin{pmatrix}
    J & \bzero & \p\\ \k^T & 0 & E \\ \bzero^T & 0 & 0
  \end{pmatrix},
\end{equation}
with $J^T = - J$, and $g \in G$ the generic element in
\eqref{eq:carroll-group-element}.  Then we have that
\begin{equation}
  \label{eq:coadjoint-action}
  \Ad_g^* \alpha =
  \begin{pmatrix}
    J' & \bzero & \p'\\ \k'^T & 0 & E' \\ \bzero^T & 0 & 0
  \end{pmatrix},
\end{equation}
where
\begin{equation}
  \tfrac12 \Tr (J'^T X) + \bb^T \k' + \ba^T \p' + c E' =   \tfrac12 \Tr (J^T X') + \bb'^T \k + \ba'^T \p + c' E,
\end{equation}
from where we can read off
\begin{equation}
  \label{eq:coadjoint-rep}
  \begin{split}
    J' &= R J R^T +  (Rk)\bv^T - \bv (R\k)^T + R\p \ba^T -  \ba (R\p)^T + E \left( \ba \bv^T - \bv\ba^T  \right) \\
    \k' &= R\k + E \ba\\
    \p' &= R\p - E\bv\\
    E' &= E.
  \end{split}
\end{equation}
This coadjoint action was already discussed in~\cite[Appendix
A]{Duval:2014uoa}.

\subsection{Maurer--Cartan one-form and particle actions}
\label{sec:maurer-cartan-one}

Let $\alpha \in \g^*$ and let $\omega_{\text{KKS}}$ denote the
KKS invariant symplectic form on the coadjoint
orbit $\eO_\alpha$ of $\alpha$.  This coadjoint orbit is by definition
a homogeneous symplectic manifold of the Carroll group and choosing
$\alpha$ as the base point, we can define an orbit map $\pi: G \to
\eO_\alpha$ by $g \mapsto \Ad^*_g \alpha$.  Pulling back
$\omega_{\text{KKS}}$ via $\pi$ we see that it is not just closed, but
actually exact
\begin{equation}
  \pi^*\omega_{\text{KKS}} = - d \left<\alpha,\theta^L\right>,
\end{equation}
where $\theta^L$ is the left-invariant $\g$-valued Maurer--Cartan
one-form.  The one-form $\left<\alpha,\theta^L\right>$ is the main
ingredient in the construction of particle actions associated to the
coadjoint orbit and hence it is convenient to record here the one-form
relative to our chosen group parametrisation.

Parametrising the generic group element as in
\eqref{eq:carroll-group-element}, we find that the pull-back of the
left-invariant Maurer--Cartan one-form is given by
\begin{equation}
  \label{eq:LI-MC-one-form}
  \begin{split}
    g^{-1}dg &=
    \begin{pmatrix}
      R^T & \bzero & -R^T\ba\\ -\bv^T & 1 & -s + \tfrac12 \ba^T R\bv\\
      \bzero^T & 0 & 1
    \end{pmatrix}
    \begin{pmatrix}
      dR & \bzero & d\ba\\ d\bv^T R + \bv^T dR & 0 & ds + \tfrac12 d\bv^T
      v + \tfrac12 \bv^T d\ba\\ \bzero^T & 0 & 0
    \end{pmatrix}\\
    &=
    \begin{pmatrix}
      R^TdR & \bzero & R^T d\ba\\ d\bv^T R & 0 & ds + \tfrac12 \ba^T
      d\bv - \tfrac12 \bv^Td\ba\\ \bzero^T & 0 & 0
    \end{pmatrix}.
  \end{split}
\end{equation}
Therefore for $\alpha \in \g^*$ given by equation~\eqref{eq:generic-covector},
\begin{equation}
  \left<\alpha, g^{-1}dg\right> = \tfrac12 \Tr J^T R^T dR + (R\p -
  \tfrac12 E \bv)^T d\ba + (R\k + \tfrac12 E \ba)^T d\bv + Eds.
\end{equation}

The particle action associated with the coadjoint orbit of $\alpha \in
\g^*$ is defined as follows.  Let $I \subset \RR$ be a real interval
parametrising a curve $g : I \to G$ in the group.  We define the
action functional
\begin{equation}
  S[g] := \int_I \left<\alpha, g^*\theta^L\right> = \int_I
  \left<\alpha, g^{-1} \dot g \right> d\tau
\end{equation}
where $\tau$ denotes the coordinate on the interval.  Varying the
action, we obtain
\begin{align*}
  \delta S &= \int_I \left<\alpha, -g^{-1}\delta g g^{-1} \dot g + g^{-1}\delta\dot g\right> d\tau\\
           &= \int_I \left<\alpha, [g^{-1}\dot g,   g^{-1}\delta g]\right> d\tau + \int_{\partial I} \left<\alpha, g^{-1}\delta g\right> &\tag{integrating by parts}\\
           &= - \int_I \left<\ad_{g^{-1}\dot g}^*\alpha, g^{-1}\delta g\right> d\tau \,  ,
\end{align*}
where we have discarded the boundary term
$\int_{\partial I} \left<\alpha,g^{-1}\delta g\right>$,
assuming an endpoint-fixed variational problem.  The variation of the
action vanishes for all $\delta g$ if and only if
$g^{-1}\dot g$ takes values in the stabiliser subalgebra
$\g_\alpha$ of $\alpha$. One might be tempted to think that this
requires $g(\tau)$ to be in the stabiliser subgroup $G_\alpha$,
but recall that $g^{-1}\dot g$ is the pull-back of a
left-invariant one-form, so the solution is actually
$g(\tau) = g_0 h(\tau)$ for some $h :I \to G_\alpha$ and where
$g_0 \in G$ is a constant element of the group. Pushing down this curve
via the orbit map $\pi: G \to \eO_\alpha$, produces
$\Ad_{g_0}^*\alpha \in \eO_\alpha$. So the extremals of the action do not
necessarily have momenta $\alpha$: all we can say is that their
momenta lie in the coadjoint orbit of $\alpha$.  This merely
highlights the fact that the action is indeed associated with the
coadjoint orbit and not with any orbit representative $\alpha$.

In Souriau's language, but going back to Lagrange, the coadjoint orbit
is the space of motions: a point in the coadjoint orbit represents a
trajectory.  Particle actions whose extremals are curves live in an
evolution space fibering over the coadjoint orbit.  The difficulty in
describing the evolution space intrinsically can be circumvented by
lifting the trajectories to the group as we have done above; even
though doing so, as mentioned in the bulk of the paper, results
in a redundancy in the description; in effect, in gauge invariance.

\subsection{The case $n=3$}
\label{sec:case-n=3}

In $n=3$ the adjoint and vector representations of $\so(3)$ are
isomorphic.  We can therefore trade $3\times 3$ skewsymmetric matrices
for vectors.  Let us define the linear map $\varepsilon : \RR^3 \to
\so(3)$ by
\begin{equation}
  \varepsilon(\ba) \bb = \ba \times \bb.
\end{equation}
This belongs to $\so(3)$ because $\varepsilon(\ba)\bb \cdot \bb = 0$,
so the endomorphism $\varepsilon(\ba)$ is skew-symmetric.  Explicitly,
\begin{equation}
  \label{eq:epsilon-rep}
  \varepsilon(\be_1) =   \begin{pmatrix}
    \zero & \zero & \zero \\
    \zero & \zero & -1 \\
    \zero & 1 & \zero 
  \end{pmatrix}\qquad\qquad
  \varepsilon(\be_2) =  \begin{pmatrix}
    \zero & \zero & 1 \\
    \zero & \zero & \zero \\
    -1 & \zero & \zero 
  \end{pmatrix}\qquad\qquad
  \varepsilon(\be_3) = \begin{pmatrix}
    \zero & -1 & \zero \\
    1 & \zero & \zero \\
    \zero & \zero & \zero 
  \end{pmatrix},
\end{equation}
so that $\varepsilon(\be_i)_{jk} = - \epsilon_{ijk}$. It follows from
the standard vector identities that $\varepsilon$ is a Lie algebra
isomorphism provided that we use the cross product to define the Lie
algebra structure on $\RR^3$:
\begin{equation}
  [\varepsilon(\ba), \varepsilon(\bb)] = \varepsilon(\ba \times \bb).
\end{equation}
Moreover, a standard calculation shows that $\varepsilon$ is an
isometry provided that we use the standard euclidean inner product on
$\RR^3$ and half the trace in the defining representation on $\so(3)$:
\begin{equation}
  \tfrac12 \Tr \varepsilon(\ba)^T \varepsilon(\bb) = \ba \cdot \bb.
\end{equation}
If $R \in \SO(3)$, it follows that
\begin{equation}
  R(\ba \times \bb) = (R\ba) \times (R\bb),
\end{equation}
which implies that
\begin{equation}
  \varepsilon( R \ba) = R \varepsilon(\ba) R^T.
\end{equation}
Finally, it follows from a straightforward calculation that
\begin{equation}
  \ba \bb^T - \bb \ba^T = \varepsilon(\bb \times \ba).
\end{equation}
Taking these formula into account and letting $J = \varepsilon(\bj)$,
we may write the coadjoint action of the group element $g \in G$ given
in \eqref{eq:carroll-group-element} on
\begin{equation}
  \alpha = \begin{pmatrix}
    \varepsilon(\bj) & \bzero & \p\\ \k^T & 0 & E \\ \bzero^T & 0 & 0
  \end{pmatrix} \in \g^*
\end{equation}
as
\begin{equation}
  \label{eq:coadjoint-action-n=3}
    \Ad_g^* \alpha =   \begin{pmatrix}
    \varepsilon(\bj') & \bzero & \p'\\
    \k'^T & 0 & E'\\
    \bzero^T & 0 & 0
  \end{pmatrix},
\end{equation}
where
\begin{equation}
  \label{eq:coadjoint-rep-n=3}
  \begin{split}
    \bj' &= R \bj + \bv \times R\k + \ba \times R\p + E\bv \times \ba\\
    \k' &= R\k + E \ba\\
    \p' &= R\p - E\bv\\
    E' &= E.
  \end{split}
\end{equation}

\subsection{Action of automorphisms on coadjoint orbits}
\label{sec:acti-autom-coadj}

In this section we will show that group automorphisms map coadjoint
orbits to coadjoint orbits symplectomorphically.  Later in
Appendix~\ref{sec:orbits-mod-autos} we apply this to further simplify
the classification of coadjoint orbits of the Carroll group.

Let $G$ be a Lie group and $\tau : G \to G$ an automorphism; that is,
a diffeomorphism which is also a group homomorphism $\tau(e) = e$ and
$\tau(ab) = \tau(a)\tau(b)$ for all $a,b \in G$.  Differentiating at
the identity we get $\tau_* : \g \to \g$, which is an automorphism of
the Lie algebra and moreover $\tau(\exp X) = \exp \tau_*X$ for all $X
\in \g$.  The invertible linear transformation $\tau_* \in \GL(\g)$
induces an invertible linear transformation $\tau^* \in \GL(\g^*)$ by
$\tau^*\alpha = \alpha \circ \tau_*^{-1}$ for $\alpha \in \g^*$.  A
natural question is how the coadjoint orbits of $\alpha$ and
$\tau^*\alpha$ are related.

\begin{lemma}
  Let $\eO_\alpha$ denote the coadjoint orbit of $\alpha \in \g^*$.
  Then $\eO_{\tau^*\alpha} = \tau^* \eO_\alpha$.
\end{lemma}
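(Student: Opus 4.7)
The plan is to establish a simple intertwining relation between $\tau^*$ and the coadjoint action, from which the equality of the two orbits follows immediately by bijectivity of $\tau$.

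First I would recall the basic compatibility at the Lie algebra level. Since $\tau$ is a group automorphism, $\tau(ghg^{-1}) = \tau(g)\tau(h)\tau(g)^{-1}$ for all $g,h \in G$. Differentiating in $h$ at the identity yields the key identity
\begin{equation*}
  \tau_*\circ \Ad_g = \Ad_{\tau(g)}\circ \tau_* ,
\end{equation*}
valid for all $g \in G$. This is the only nontrivial input, and it is immediate from the definitions.

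Next I would dualise. For $\alpha\in \g^*$ and $X\in \g$, using the definition $\langle \tau^*\alpha, X\rangle = \langle \alpha, \tau_*^{-1}X\rangle$ together with the above identity applied to $g^{-1}$ (which gives $\Ad_{g^{-1}}\circ \tau_*^{-1} = \tau_*^{-1}\circ \Ad_{\tau(g)^{-1}}$), a short computation yields
\begin{equation*}
  \tau^*\circ \Ad^*_g = \Ad^*_{\tau(g)} \circ \tau^* .
\end{equation*}
In particular, $\tau^*$ intertwines the coadjoint actions along the group automorphism $\tau$.

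Finally I would deduce the orbit equality from this intertwining. For the inclusion $\tau^*\eO_\alpha \subseteq \eO_{\tau^*\alpha}$: any element of $\tau^*\eO_\alpha$ has the form $\tau^*(\Ad^*_g\alpha) = \Ad^*_{\tau(g)}(\tau^*\alpha)$, which lies in $\eO_{\tau^*\alpha}$. For the reverse inclusion: any element of $\eO_{\tau^*\alpha}$ has the form $\Ad^*_h(\tau^*\alpha)$, and since $\tau$ is a bijection we may write $h = \tau(g)$ with $g = \tau^{-1}(h)$, so $\Ad^*_h(\tau^*\alpha) = \tau^*(\Ad^*_g\alpha) \in \tau^*\eO_\alpha$. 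Hence $\eO_{\tau^*\alpha} = \tau^*\eO_\alpha$.

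There is no real obstacle: the content is entirely captured by differentiating the conjugation identity and dualising. If one also wanted the symplectomorphism statement promised in the surrounding discussion, one would observe that the KKS form depends only on the Lie bracket and the pairing $\langle\cdot,\cdot\rangle$, both of which are preserved by $\tau_*$ and its dual, so the intertwining $\tau^*\Ad^*_g = \Ad^*_{\tau(g)}\tau^*$ automatically promotes $\tau^*\vert_{\eO_\alpha}$ to a symplectomorphism onto $\eO_{\tau^*\alpha}$.
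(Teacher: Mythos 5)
Your proposal is correct and follows essentially the same route as the paper: both establish the intertwining relation $\tau^*\circ\Ad^*_g = \Ad^*_{\tau(g)}\circ\tau^*$ (the paper writes the equivalent form $\Ad^*_g\tau^*\alpha = \tau^*\Ad^*_{\tau^{-1}(g)}\alpha$, derived via the exponential curve rather than by differentiating the conjugation identity directly) and then deduce the orbit equality from the bijectivity of $\tau$. Your closing remark on the symplectomorphism also matches the paper's subsequent lemma.
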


\begin{proof}
  Let $g \in G$, $\alpha \in \g^*$ and $X \in \g$.  Then
  \begin{align*}
    \left<\Ad^*_g \tau^*\alpha, X\right> &= \left<\tau^*\alpha, \Ad_{g^{-1}} X\right>\\
                                         &= \left<\alpha, \tau_*^{-1} \Ad_{g^{-1}} X\right>.
  \end{align*}
  But now
  \begin{align*}
    \tau_*^{-1} \Ad_{g^{-1}} X &= \left.\frac{d}{dt} \tau^{-1} (\exp(t \Ad_{g^{-1}}X ))\right|_{t=0} &\tag{$\tau_*^{-1} = (\tau^{-1})_*$}\\
                               &= \left.\frac{d}{dt} \tau^{-1} (g^{-1} \exp(t X ) g)\right|_{t=0}\\
                               &= \left.\frac{d}{dt} \left( \tau^{-1} (g^{-1}) \tau^{-1}(\exp(t X )) \tau^{-1}(g)\right)\right|_{t=0} &\tag{$\tau^{-1}$ is an automorphism of $G$}\\
                               &=\Ad_{\tau^{-1}(g^{-1})} \tau^{-1}_* X.
  \end{align*}
  Therefore,
  \begin{align*}
    \left<\Ad^*_g \tau^*\alpha, X\right> &= \left<\alpha, \Ad_{\tau^{-1}(g^{-1})} \tau^{-1}_* X\right>\\
                                         &= \left<\Ad^*_{\tau^{-1}(g)} \alpha, \tau^{-1}_* X\right> &\tag{$\tau^{-1}(g^{-1}) = \tau^{-1}(g)^{-1}$}\\
                                         &= \left<\tau^*\Ad^*_{\tau^{-1}(g)} \alpha, X\right>,
  \end{align*}
  so that
  \begin{equation}
    \label{eq:coad-tau}
    \Ad^*_g \tau^*\alpha =\tau^*\Ad^*_{\tau^{-1}(g)} \alpha.
  \end{equation}
  This finally implies that
  \begin{align*}
    \eO_{\tau^*\alpha} &= \left\{ \Ad^*_g \tau^*\alpha \middle | g \in G\right\} \\
                       &= \left\{ \tau^*\Ad^*_{\tau^{-1}(g)} \alpha \middle | g \in G\right\} \\
                       &= \tau^* \left\{\Ad^*_{\tau^{-1}(g)} \alpha \middle | g \in G\right\} \\
                       &= \tau^* \eO_\alpha,
  \end{align*}
  since $g = \tau^{-1}(\tau(g))$ for all $g \in G$.
\end{proof}

Of course, if $\tau$ is an inner automorphism, so that $\tau(g) = h g
h^{-1}$ for some $h \in G$, it follows that $\tau^*\alpha = \Ad^*_h
\alpha$ and hence $\eO_{\tau^*\alpha} = \eO_{\alpha}$.  Hence only
outer automorphisms relate different coadjoint orbits.

More is true and the diffeomorphism $\tau^*$ of $\g^*$ relates the
KKS symplectic forms on $\eO_\alpha$ and on $\eO_{\tau^*\alpha}$.

\begin{lemma}
  Let $\omega \in \Omega^2(\eO_{\tau^*\alpha})$ denote the KKS
  symplectic form on $\eO_{\tau^*\alpha}$.  Then $(\tau^*)^*\omega$
  agrees with the KKS symplectic form on $\eO_\alpha$.
\end{lemma}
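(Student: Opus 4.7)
The plan is to prove this by direct computation at a generic point, using the fact that the KKS form is completely determined by the coadjoint action and that $\tau_*$ is a Lie algebra automorphism. Since both $\eO_\alpha$ and $\eO_{\tau^*\alpha}$ are homogeneous under coadjoint action and $\tau^*$ is equivariant (by equation~\eqref{eq:coad-tau} from the previous lemma), it suffices to verify the identity at the base points.

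First, I would describe the tangent space $T_\alpha \eO_\alpha$ as the image of the map $X \mapsto \xi_X(\alpha) := -\ad^*_X \alpha$ from $\g$, where $\langle \xi_X(\alpha), Y\rangle = -\langle \alpha,[X,Y]\rangle$. The KKS symplectic form is then defined by
\begin{equation}
  \omega_\alpha^{\text{KKS}}(\xi_X(\alpha),\xi_Y(\alpha)) = \langle \alpha,[X,Y]\rangle,
\end{equation}
and analogously on $\eO_{\tau^*\alpha}$. Since $\tau^*: \g^* \to \g^*$ is linear, its differential at any point is the map $\tau^*$ itself, so I only need to compute how it transforms these fundamental tangent vectors.

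Next, for any $X,Y \in \g$ I would compute
\begin{equation}
  \langle \tau^*(\xi_X(\alpha)), Y\rangle = \langle \xi_X(\alpha),\tau_*^{-1}Y\rangle = -\langle \alpha,[X,\tau_*^{-1}Y]\rangle = -\langle \alpha,\tau_*^{-1}[\tau_* X,Y]\rangle = -\langle \tau^*\alpha,[\tau_* X,Y]\rangle,
\end{equation}
where the third equality uses that $\tau_*$ is a Lie algebra automorphism. Hence $(\tau^*)_*\xi_X(\alpha) = \xi_{\tau_* X}(\tau^*\alpha)$, which is the key intertwining identity; it says the fundamental vector fields are pushed forward exactly as the Lie algebra is twisted.

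Finally, I would plug this into the pullback:
\begin{equation}
  \bigl((\tau^*)^*\omega\bigr)_\alpha(\xi_X(\alpha),\xi_Y(\alpha)) = \omega_{\tau^*\alpha}^{\text{KKS}}(\xi_{\tau_* X}(\tau^*\alpha),\xi_{\tau_* Y}(\tau^*\alpha)) = \langle \tau^*\alpha,[\tau_* X,\tau_* Y]\rangle = \langle \tau^*\alpha,\tau_*[X,Y]\rangle = \langle \alpha,[X,Y]\rangle,
\end{equation}
which is exactly $\omega_\alpha^{\text{KKS}}(\xi_X(\alpha),\xi_Y(\alpha))$. Homogeneity (equivariance of $\tau^*$ under coadjoint action, established in the preceding lemma) extends this to every point of $\eO_\alpha$. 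I do not expect any real obstacle; the computation is essentially bookkeeping and the only conceptual input is that $\tau_*$ preserves the bracket, which is precisely what makes $\tau$ an automorphism. An alternative, essentially equivalent, route would be to pull back the one-form $\langle \alpha,\theta^L\rangle$ from Appendix~\ref{sec:maurer-cartan-one} via the automorphism and take $d$, but the infinitesimal computation above is the cleanest.
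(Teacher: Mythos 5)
Your proof is correct and follows essentially the same route as the paper's: both establish the key intertwining identity that the fundamental vector field $\xi_X$ at $\alpha$ is pushed forward by $\tau^*$ to $\xi_{\tau_*X}$ at $\tau^*\alpha$, and then conclude using that $\tau_*$ preserves the Lie bracket. The only cosmetic difference is that you derive the intertwining identity from the linearity of $\tau^*$ and a direct pairing computation, whereas the paper obtains it by differentiating the group-level equivariance $\Ad^*_g\tau^*\alpha = \tau^*\Ad^*_{\tau^{-1}(g)}\alpha$ from the preceding lemma along $\exp(tX)$ at $t=0$.
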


\begin{proof}
  Let us introduce the notation $\phi := \tau^*$ to denote the
  diffeomorphism of $\g^*$ given by the action of the automorphism
  $\tau$.  (This declutters the notation somewhat.)  Then we wish to
  show that $\phi^*\omega$ is the KKS symplectic form on $\eO_\alpha$.
  Let $X \in \g$ and let $\left.\ad^*_X\right|_\alpha \in T_\alpha \g^*$
  be defined by
  \begin{equation}
   \left.\ad^*_X\right|_\alpha :=\left.\frac{d}{dt} \Ad^*_{\exp(t X)}
      \alpha \right|_{t=0}.
  \end{equation}
  Then the KKS symplectic structure $\omega_{\mathrm{KKS}}$ on
  $\eO_\alpha$ is defined by at $\alpha \in \g^*$ by
  \begin{equation}
    \omega_{\mathrm{KKS}} (\left.\ad^*_X\right|_\alpha, \left.\ad^*_Y\right|_\alpha ) = \left<\alpha, [X,Y]\right>.
  \end{equation}
  On the other hand,
  \begin{equation}
    (\phi^*\omega)_\alpha (\left.\ad^*_X\right|_\alpha,
    \left.\ad^*_Y\right|_\alpha )= \omega_{\phi(\alpha)}(\phi_*
    \left.\ad^*_X\right|_\alpha, \phi_* \left.\ad^*_Y\right|_\alpha ),
  \end{equation}
  where
  \begin{align*}
    \phi_* \left.\ad^*_X\right|_\alpha &= \left.\frac{d}{dt} \phi \left(\Ad^*_{\exp(t X)} \alpha\right) \right|_{t=0}\\
                                       &= \left.\frac{d}{dt} \tau^* \left(\Ad^*_{\exp(t X)} \alpha\right) \right|_{t=0}\\
                                       &= \left.\frac{d}{dt} \Ad^*_{\tau(\exp(t X))} \tau^* \alpha \right|_{t=0} &\tag{by equation~\eqref{eq:coad-tau}}\\
                                       &= \left.\frac{d}{dt} \Ad^*_{\exp(t \tau_* X)} \tau^* \alpha \right|_{t=0}\\
                                       &=\left.\ad^*_{\tau_* X} \right|_{\tau^*\alpha}.
  \end{align*}
  Therefore,
  \begin{align*}
    (\phi^*\omega)_\alpha (\left.\ad^*_X\right|_\alpha, \left.\ad^*_Y\right|_\alpha) &= \omega_{\tau^*\alpha}(\left.\ad^*_{\tau_* X} \right|_{\tau^*\alpha}, \left.\ad^*_{\tau_* Y} \right|_{\tau^*\alpha})\\
                                                                                      &= \left<\tau^*\alpha, [\tau_*X,\tau_*Y]\right> &\tag{by definition of KKS 2-form}\\
                                                                                      &= \left<\tau^*\alpha, \tau_*[X,Y]\right> &\tag{since $\tau_*$ is a Lie algebra automorphism}\\
                                                                                      &= \left<\alpha, [X,Y]\right>\\
                                                                                      &= \omega_{\mathrm{KKS}}(\left.\ad^*_X\right|_\alpha, \left.\ad^*_Y\right|_\alpha),
  \end{align*}
  so that $\phi^*\omega = \omega_{\mathrm{KKS}}$.
\end{proof}

\section{Coadjoint orbits ($n=3$)}
\label{sec:coadjoint-orbits-n=3}

In this appendix we classify the coadjoint orbits of the connected
Carroll group with $n=3$.  This Carroll group has (at least) two
Casimir elements \cite{MR0192900}.  The generator $H$ is central and hence
itself a Casimir. It defines a linear function on $\g^*$ taking the
value $E$ on $\alpha = (\bj,\k,\p,E)$, which is a constant on
coadjoint orbits.  There is also a second Casimir, namely the
euclidean norm $W^2$ of $W_a = H J_a + \epsilon_{abc} P_bB_c$, which
is a quartic symmetric tensor of $\g$.  This also defines a quartic
polynomial function on $\g^*$, taking the value
$\|E \bj + \p \times \k\|^2$ on $\alpha =(\bj,\k,\p,E)$, which is
again constant on coadjoint orbits.

We focus first on the linear Casimir $H$.  There are two main classes
of coadjoint orbits: those for which $E\neq0$ and those for which
$E=0$.

\subsection{Coadjoint orbits with $E\neq 0$}
\label{sec:coadj-orbits-with-nonzero-energy}

Let $\alpha \in \g^*$ be given by $(\bj, \k, \p, E)$ with $E \neq 0$.
We can act with $g(\mathbb{1},\bv,\ba,0)$ with $\ba = -E^{-1}\k$ and $\bv =
E^{-1}\p$ so bring $\alpha$ to $(\bj', \bzero,\bzero, E)$, where
\begin{equation}
  \bj' =\bj + E^{-1} \p\times \k.
\end{equation}
We distinguish two cases.

\subsubsection{Spinless orbits}
\label{sec:spinless}

This results in the coadjoint orbit of $\alpha =
(\bzero,\bzero,\bzero, E)$.  The stabiliser subgroup consists of
matrices of the form
\begin{equation}
  \begin{pmatrix}
    R & \bzero & \bzero\\ \bzero^T & 1 & s \\ \bzero^T & 0 & 1
  \end{pmatrix}
\end{equation}
which is isomorphic to $\SO(3) \times \RR$.  The coadjoint orbit is
therefore six-dimensional.

\subsubsection{Orbits with nonzero spin}
\label{sec:spin-j}

We may use rotations to bring $\bj + E^{-1} \p \times \k$ to any
desired direction.  This results in the orbit of $\alpha = (\bj,
\bzero,\bzero, E)$ with $\bj = \begin{pmatrix} 0 \\ 0 \\
  j \end{pmatrix}$ with $j>0$.  The stabiliser of $\alpha$ consists of
matrices of the form
\begin{equation}
  \begin{pmatrix}
    R & \bzero & \bzero\\ \bzero^T & 1 & s \\ \bzero^T & 0 & 1
  \end{pmatrix} \qquad\text{with $R \bj = \bj$,}
\end{equation}
so it is isomorphic to $\SO(2) \times \RR$.  The coadjoint orbit is
therefore eight-dimensional.

\subsection{Coadjoint orbits with $E=0$}
\label{sec:coadj-orbits-with-zero-energy}

If $E=0$, the coadjoint action of the generic element $g(R,\bv,\ba,s)$
on $\alpha = (\bj,\k,\p,0)$ becomes $(\bj', \k', \p',0)$ where
\begin{equation}
  \label{eq:coadjoint-rep-n=3-e=0}
  \begin{split}
    \bj' &= R \bj + \bv \times R\k + \ba \times R\p\\
    \k' &= R\k\\
    \p' &= R\p.
  \end{split}
\end{equation}
The value of the quartic Casimir on $\alpha$ is given by $\|\p \times
\k\|^2$ which is clearly seen to be invariant, since $\p' \times \k' =
R\p \times Rk = R(\p \times \k)$ for $R \in \SO(3)$, so that $\|\p'
\times\k'\|^2 = \|\p \times \k\|^2$.  We can therefore distinguish
between two cases: those with $\p \times \k \neq \bzero$ and those
with $\p \times \k = \bzero$.  This latter case says that $\p$ and
$\k$ are collinear.  There are several cases here, depending on
whether $\p$ or $\k$ are zero or not.

\subsubsection{\texorpdfstring{$\p = \k = \bzero$}{p=k=0}}
\label{sec:p-k-eq-0}

In this case, $(\bj,\bzero,\bzero,0)$ is mapped to $(R\bj, \bzero,
\bzero, 0)$, so we have two possibilities:

\begin{itemize}
\item If $\bj = \bzero$, the orbit consists of the point $(\bzero,\bzero,\bzero, 0)$.
\item If $\bj \neq \bzero$, the orbit is a sphere of radius
  $\|\bj\|$ and we may choose $\left( \begin{pmatrix}0 \\ 0 \\
      j \end{pmatrix}, \bzero, \bzero, 0\right)$, with $j>0$, as
  representative point in the orbit.
\end{itemize}

\subsubsection{\texorpdfstring{$\k \neq \bzero$, $\p = \bzero$}{k≠0,p=0}}
\label{sec:k-neq-zero}

Here $(\bj,\k,\bzero,0)$ is sent to
$(R\bj + \bv \times R \k,R\k,\bzero,0)$.  We may bring $\k$ to
$\begin{pmatrix}0 \\ 0 \\ k \end{pmatrix}$, with $k>0$, leaving still
the possibility of acting with any $R$ in the stabiliser of $\k$.
Choosing $\bv$ suitably we may bring $\bj$ to $\begin{pmatrix}0 \\ 0
  \\ j \end{pmatrix}$, where $j\in \RR$. In other words, we may take
as a representative of the orbit,
\begin{equation}
  \left( \begin{pmatrix} 0 \\ 0 \\ j \end{pmatrix},
    \begin{pmatrix} 0 \\ 0 \\ k \end{pmatrix}, \bzero, 0\right) \qquad\text{where $j\in \RR$ and $k > 0$,}
\end{equation}
whose stabiliser is the subgroup
consisting of matrices of the form
\begin{equation}
  \begin{pmatrix}
    R & \bzero & \ba \\ \tfrac{v}{k}\k^T & 1 & s + \tfrac12
    \tfrac{v}{k} \ba^T \k\\ \bzero^T & 0 & 1
  \end{pmatrix} \qquad\text{with $R\k = \k$.}
\end{equation}
The stabiliser is six-dimensional, so the orbit is four-dimensional.

\subsubsection{\texorpdfstring{$\p \neq \bzero$, $\k = \bzero$}{p≠0,k=0}}
\label{sec:p-neq-zero}

The story here is very similar to the previous case and, indeed, the
orbits are related via outer automorphisms.  Now $(\bj,\bzero,\p,0)$
is sent to $(R\bj + \ba \times R \p, \bzero, R\p, 0)$.  Hence we may
bring $\p$ to $\begin{pmatrix}0 \\ 0 \\ p \end{pmatrix}$, with $p>0$,
leaving still the possibility of acting with any $R$ in the stabiliser
of $\p$.  By choosing $\ba$ suitably, we may bring $\bj$ to
$\begin{pmatrix}0 \\ 0 \\ j \end{pmatrix}$, where $j\in \RR$.  In
other words, we may take as a representative of the orbit,
\begin{equation}
  \left( \begin{pmatrix} 0 \\ 0 \\ j \end{pmatrix},
    \bzero,  \begin{pmatrix} 0 \\ 0 \\ p \end{pmatrix}, 0\right) \qquad\text{where $j\in \RR$ and $p > 0$,}
\end{equation}
whose stabiliser is the subgroup
consisting of matrices of the form
\begin{equation}
  \begin{pmatrix}
    R & \bzero & \tfrac{a}{p} \p \\ \bv^T & 1 & s + \tfrac12
    \tfrac{a}{p} \bv^T \p\\ \bzero^T & 0 & 1
  \end{pmatrix} \qquad\text{with $R\p = \p$.}
\end{equation}
The stabiliser is six-dimensional, so the orbit is four-dimensional.

\subsubsection{\texorpdfstring{$\p \neq \bzero$, $\k \neq  \bzero$, $\p\times\k=\bzero$}{p≠0,k≠0,p x k=0}}
\label{sec:p-k-neq-zero}

This case is also related to the previous two by automorphisms.  Since
$\p \times \k  = \bzero$, they are collinear: either parallel or
antiparallel.  This means that, letting $\|\p\|=p >0$ and $\|\k\| = k >
0$, $\p \cdot \k = \pm pk$, so the angle between them is $0$ (for the
plus sign) or $\pi$ for the minus sign.  Hence $(\bj,\k,\p,0)$ is sent
to $(R\bj + (\ba \pm \frac{k}{p}\bv) \times R\p, R\k, R\p, 0)$.  We may bring $\p$ to $\begin{pmatrix}0 \\
  0 \\ p \end{pmatrix}$, with $p>0$ and hence $\k$ to $\begin{pmatrix}0 \\
  0 \\ \pm k \end{pmatrix}$.  With $R$ such that $R \p = \p$ (and hence also $R\k =
\k$), we have that $(\bj,\k,\p,0)$ is sent to $(R\bj + (\ba \pm \frac{k}{p} \bv)
\times \p, \k, \p, 0)$ and we may use $\ba \pm \frac{k}{p} \bv$ suitably in
order to make $\bj = \begin{pmatrix}0 \\  0 \\ j \end{pmatrix}$, for
some $j \in \RR$.  In summary, as representative of the orbit we may
take
\begin{equation}
  \left( \begin{pmatrix} 0 \\ 0 \\ j \end{pmatrix},
   \begin{pmatrix} 0 \\ 0 \\ \pm k \end{pmatrix},   \begin{pmatrix}
     0 \\ 0 \\ p \end{pmatrix}, 0\right) \qquad\text{where $p>0$,
   $k>0$ and $j\in \RR$,}
\end{equation}
whose stabiliser is the subgroup consisting of matrices of the form
\begin{equation}
  \begin{pmatrix}
    R & \bzero & \ba \\ \bv^T & 1 & s + \tfrac12
    \ba^T \bv\\ \bzero^T & 0 & 1
  \end{pmatrix} \qquad\text{with $R\p = \p$ and $(\ba \pm \frac{k}{p} \bv)
    \times \p = 0$.}
\end{equation}
The stabiliser is six-dimensional, so the orbit is four-dimensional.

\subsubsection{\texorpdfstring{$\p \times \k \neq \bzero$}{p x k ≠ 0}}
\label{sec:p-times-k-1}

Here $\p$ and $\k$ span a plane, which we can choose to be the plane
of vectors whose first entry is zero.  In other words, under
\begin{equation}
  (\bj, \k,\p,0) \mapsto (R\bj + \ba \times R\p + \bv \times R\k, R\k, R\p,0)
\end{equation}
we may bring $\p$ to $\begin{pmatrix} 0 \\ 0 \\ p \end{pmatrix}$ with
$p>0$ and then with $R$ such that $R\p = \p$, we may bring $\k$  to
$\begin{pmatrix} 0 \\ k_2 \\ k_3 \end{pmatrix}$, with $k_2 \neq 0$.
This fixes the rotational symmetry completely and we now have $(\bj, \k, \p, 0)$ is sent to
$(\bj + \ba \times \p + \bv \times \k, \k,\p,0)$.  It is not hard to
see that by choosing $\ba$ and $\bv$ suitably, we can set $\bj =
\bzero$.  In other words, as a representative of the orbit we may take
the covector
\begin{equation}
  \left( \bzero, \begin{pmatrix} 0 \\ k_2 \\ k_3 \end{pmatrix},   \begin{pmatrix}
     0 \\ 0 \\ p \end{pmatrix}, 0\right) \qquad\text{where $p>0$, $k_3
   \in \RR$ and $k_2 \neq 0$,}
\end{equation}
whose stabiliser is the four-dimensional subgroup of the Carroll group
consisting of matrices of the form
\begin{equation}
  \begin{pmatrix}
    \mathbb{1} & \bzero & \ba \\ \bv^T & 1 & s + \tfrac12 \bv^T \ba\\
    \bzero^T & 0 & 1
  \end{pmatrix} \qquad\text{with $\ba \times \p + \bv \times \k  = \bzero$.}
\end{equation}
The orbit is therefore six-dimensional.

These orbits are listed (listed using the names of the corresponding
Carroll particles) in Table~\ref{tab:carrollions}.

\subsection{Coadjoint orbits of the full Carroll group}
\label{sec:coadj-orbits-full}

The full Carroll group has two connected components, since now $R \in
\Ort(3)$.  The group $\Ort(3)$ is generated by $\SO(3)$ and parity
$P$, which we can think of as space inversion $P\x = - \x$ while
leaving $t$ inert.  We expect that under parity, some coadjoint orbits
are mapped to themselves whereas some other orbits are paired.
However it is clear from the explicit form of the orbit
representatives, that we can always undo the effect of parity
\begin{equation}
  (\bj, \k, \p, E) \mapsto (-\bj, -\k, -\p, E)
\end{equation}
with a rotation, since always $\bj,\k,\p$ have at least one zero
component.  Therefore, the above classification is also the
classification of coadjoint orbits of the full Carroll group.

\subsection{Coadjoint orbits up to automorphisms}
\label{sec:orbits-mod-autos}

Let us now see how automorphisms of the Carroll group relate different
coadjoint orbits.  As we saw in Section~\ref{sec:automorphisms}, the
automorphisms of the Carroll algebra which fix the rotational
subalgebra are given by $\begin{pmatrix} a & b \\ c & d \end{pmatrix}
\in \GL(2,\RR)$ acting via equation~\eqref{eq:autos-carroll}.  The
action on $\alpha \in \g^*$ is via the transpose inverse, so that
$(\bj,\k,\p,E) \mapsto (\bj', \k', \p', E')$ with
\begin{equation}
  \label{eq:autos-momenta}
  \begin{split}
    \bj' &= \bj\\
    \k' &= \frac{d \k - b \p}{ad -bc}\\
    \p' &= \frac{a\p - c \k}{ad - bc}\\
    E' &= \frac{E}{ad-bc}.
  \end{split}
\end{equation}
In particular, we see that both $E$ and $\p \times \k$ transform as
densities of weight $-1$; that is, via multiplication by the
reciprocal of the determinant.

It is now a simple matter to go through the coadjoint orbits listed in
Table~\ref{tab:carrollions} and see how the automorphisms act:
\begin{itemize}
\item[($1$)] Since automorphisms allow us to rescale the energy by a
  nonzero number, all orbits are equivalent under automorphisms, so we can choose
  $(\bzero,\bzero,\bzero,1)$ as orbit representative.  The collection
  of all orbits of nonzero energy and $\bj = \bzero$ can be obtained
  by the combined action of the Carroll group (via the coadjoint
  representation) and the automorphisms from
  $(\bzero,\bzero,\bzero,1)$.
\item[($2$)] We may again rescale the energy to any desired (nonzero) value,
  but we cannot change the spin $S>0$. So we can take $(S \bu,
  \bzero,\bzero, 1)$ as the orbit representative.
\item[($3$)] There is only one orbit of this type.
\item[($4$)] Orbits of this type belong to a one-parameter family labelled by
  $j>0$: automorphisms cannot change $j$.
\item[($5,6,7_\pm$)] The parameter $h$ cannot be changed, but the
  automorphisms act transitively on the other parameters.  Therefore
  up to automorphisms, we have a one-parameter ($h \in \RR$) family of
  orbits.
\item[($8$)] All orbits of this type are equivalent under automorphisms:
  $\p$ and $\k$ span a plane and $\GL(2,\RR)$ acts transitively on the
  bases of that plane.
\end{itemize}
In summary, we may list the equivalence classes of coadjoint orbits of
the Carroll group up to the action of automorphisms as in
Table~\ref{tab:orbits-mod-autos}.

\begin{table}
  \centering
  \caption{Coadjoint orbits up to automorphisms}
  \setlength{\extrarowheight}{3pt}
  \begin{tabular}{>{$}l<{$}>{$}l<{$}>{$}c<{$}}
    \toprule
    \multicolumn{1}{l}{\#} & \multicolumn{1}{c}{Orbit representative} & \dim\mathcal{O}_\alpha\\
                           & \multicolumn{1}{c}{$\alpha=(\bj, \k, \p,E)$} & \\ \midrule \rowcolor{blue!7}
    \relax [1]  & (\bzero, \bzero, \bzero,1) & 6 \\ 
    \relax [2]_S & (S\bu, \bzero,  \bzero , 1)  & 8 \\ \midrule \rowcolor{blue!7}
    \relax [3]  & (\bzero,\bzero,\bzero,0)  & 0  \\     
    \relax [4]_j  & (j\bu,\bzero,\bzero,0) & 2 \\ \rowcolor{blue!7}
    \relax [5,6,7_\pm]_h  & ( h \bu, \bzero, \bu, 0) & 4 \\
    \relax [8] &  (\bzero, \bu, \bu_\perp,0)  & 6\\
    \bottomrule
  \end{tabular}
  \caption*{This table lists the equivalence classes of coadjoint
    orbits of the Carroll group up to the action of automorphisms.
    Some of the parameters in Table~\ref{tab:carrollions} become
    ineffective (up to automorphisms), whereas the greatest
    simplifications come from the fact that all the four-dimensional
    orbits (with the same value of the ``helicity'' $h$) are
    related by automorphisms.  As in Table~\ref{tab:carrollions},
    $\bu$ stands for a fixed, but arbitrary, unit vector and
    $\bu_\perp$ a second unit vector perpendicular to $\bu$.  We see
    that the dimension of the orbit almost determines the class (up to
    automorphisms), except that we have two distinct classes of
    six-dimensional orbits: one with $E \neq 0$ and one with $E = 0$.}
  \label{tab:orbits-mod-autos}
\end{table}

\subsection{Structure of the coadjoint orbits}
\label{sec:struct-coadj-orbits}

The Carroll group is a semidirect product $K \ltimes T$ where $K$ is
the subgroup generated by $J_a,B_a$ and is isomorphic to the
three-dimensional euclidean group, and $T$ is the abelian normal
subgroup generated by $P_a, H$.  In terms of our explicit matrix
parametrisations,
\begin{equation}
  K = \left\{
    \begin{pmatrix}
      R & \bzero & \bzero \\ \bv^T R & 1 & 0 \\ \bzero^T & 0 & 1
    \end{pmatrix}
~ \middle | ~  R \in \SO(3), \bv \in \RR^3\right\}
\end{equation}
 and
\begin{equation}
  T = \left\{
    \begin{pmatrix}
      \mathbb{1} & \bzero & \ba \\ \bzero^T & 1 & s \\ \bzero^T & 0 & 1
    \end{pmatrix}
~ \middle | ~  s \in \RR , \ba \in \RR^3\right\}.
\end{equation}

\subsubsection{Coadjoint orbits of semi-direct products}
\label{sec:coadj-orbits-semi}

Coadjoint orbits of such a semidirect product $K \ltimes T$ are easy
to describe geometrically.  (See, e.g., Oblak's thesis
\cite{Oblak:2016eij}.)  We recall here the main points.  Every
$\alpha \in \g^*$, with $\g = \fk \ltimes \t$, decomposes into
$\alpha = (\kappa,\tau)$ with $\tau \in \t^*$ and $\kappa \in \fk^*$.
The $G$-coadjoint orbit of $\alpha \in \g^*$ fibers over the $K$-orbit
$\eO_\tau$ of $\tau \in \t^*$ under the $K$-action given by the
semidirect product structure.

The action of $K$ on $T$ is by matrix conjugation in $G$:
\begin{equation}
  \begin{pmatrix}
      R & \bzero & \bzero \\ \bv^T R & 1 & 0 \\ \bzero^T & 0 & 1
    \end{pmatrix}
    \begin{pmatrix}
      \mathbb{1} & \bzero & \ba \\ \bzero^T & 1 & s \\ \bzero^T & 0 & 1
    \end{pmatrix}
  \begin{pmatrix}
      R^T & \bzero & \bzero \\ -\bv^T & 1 & 0 \\ \bzero^T & 0 & 1
    \end{pmatrix}
    =
    \begin{pmatrix}
      \mathbb{1} & \bzero & R\ba \\ \bzero^T & 1 & s + \bv^T R \ba\\
      \bzero^T & 0 & 1
    \end{pmatrix},
\end{equation}
from where we can read off the adjoint action on the Lie algebra $\t$
of $T$:
\begin{equation}
  \Ad_{
    \begin{pmatrix}
      R & \bzero & \bzero \\ \bv^T R & 1 & 0 \\ \bzero^T & 0 & 1
    \end{pmatrix}} \begin{pmatrix}
      0 & \bzero & \ba \\ \bzero^T & 0 & c \\ \bzero^T & 0 & 0
    \end{pmatrix} =
    \begin{pmatrix}
      0 & \bzero & R\ba \\ \bzero^T & 0 & c+\bv^T R \ba \\ \bzero^T & 0 & 0
    \end{pmatrix},
  \end{equation}
  or in abbreviated form
\begin{equation}
  \Ad_{(R,\bv)} (\ba, c) = (R\ba , c + \bv^T R \ba).
\end{equation}
The inverse of $(R,\bv)$ is given by $(R^T,-R^T\bv)$, and hence
\begin{equation}
  \Ad_{(R,\bv)^{-1}} (\ba, c) = (R^T\ba , c - \bv^T \ba).
\end{equation}
From this we can work out the action on $\t^*$:
\begin{equation}
  \Ad_{(R,\bv)} (\p, E) = (R\p - E\bv, E).
\end{equation}

Inspecting the coadjoint orbits in
Sections~\ref{sec:coadj-orbits-with-nonzero-energy} and
\ref{sec:coadj-orbits-with-zero-energy}, we see that the
representative $\tau = (\p, E) \in \t^*$ are of three kinds:
\begin{itemize}
\item $\tau = (\bzero,E_0)$, $E_0\neq 0$, whose orbit under $K$ is the  affine hyperplane $\AA^3$ defined by $E = E_0$ in $\t^*$;
\item $\tau = (\bzero,0)$, whose orbit is only that point; and
\item $\tau = (\p\neq\bzero,0)$, whose orbit is the sphere $S^2_{\|\p\|}$ of radius $\|\p\|$ in the hyperplane $E=0$ in $\t^*$.
\end{itemize}

Let $K_\tau \subset K$ denote the stabiliser of $\tau \in \t^*$ and
$\fk_\tau$ denote its Lie algebra.  Let $\kappa_\tau \in \fk_\tau^*$
denote the restriction of $\kappa$ to $\fk_\tau$.  The other
ingredient in the $G$-coadjoint orbit of $\alpha$ is the
$K_\tau$-coadjoint orbit of $\kappa_\tau$, which we denote
$\eO_{\kappa_\tau}$.

Now, the $K$-orbit $\eO_\tau$ of $\tau \in \g^*$ is $K$-equivariantly
diffeomorphic to $K/K_\tau$ and any space on which $K_\tau$ acts,
e.g., $\eO_{\kappa_\tau}$, defines an associated fibre bundle $K
\times_{K_\tau} \eO_{\kappa_\tau} \to \eO_\tau$.  Finally, the
$G$-coadjoint orbit $\eO_\alpha$ of $\alpha = (\kappa,\tau)$ is the
fibred product of $K\times_{K_\tau} \eO_{\kappa_\tau} \to \eO_\tau$
with the cotangent bundle $T^*\eO_\tau \to \eO_\tau$:
\begin{equation}
  \begin{tikzcd}
    \eO_\alpha \arrow[r] \arrow[d] & T^*\eO_\tau \arrow[d] \\
    K\times_{K_\tau} \eO_{\kappa_\tau} \arrow[r] & \eO_\tau\\
  \end{tikzcd}
\end{equation}
The standard notation for this fibred product is
\begin{equation}
  \eO_\alpha = T^*\eO_\tau\times_{\eO_\tau} (K \times_{K_\tau} \eO_{\kappa_\tau}).
\end{equation}
Let us show how to calculate the dimension from this expression.  The
idea is to add the dimensions of the spaces that appear ``above'' (here
$T^*\eO_\tau$, $K$, $\eO_{\kappa_\tau}$) and subtract the dimensions
of the spaces which appear ``below'' (here $\eO_\tau$ and $K_\tau$).
Doing so we obtain
\begin{equation}
  \begin{split}
    \dim\eO_\alpha &= \dim T^*\eO_\tau - \dim \eO_\tau + \dim K - \dim K_\tau + \dim \eO_{\kappa_\tau}\\
    &= 2 \dim \eO_\tau + \dim \eO_{\kappa_\tau}.
  \end{split}
\end{equation}

The cotangent bundle $T^*\eO_\tau$ is itself isomorphic to an
associated vector bundle $T^*\eO_\tau \cong K \times_{K_\tau}
\fk_\tau^0$, where the annihilator $\fk_\tau^0\subset \fk^*$ is
isomorphic to the dual $(\fk/\fk_\tau)^*$.  Hence the coadjoint orbit
$\eO_\alpha$ can also be written as
\begin{equation}
  \eO_\alpha = K \times_{K_\tau} \left( \fk_\tau^0 \times \eO_{\kappa_\tau} \right).
\end{equation}
Again we can calculate the dimension as we did above
\begin{equation}
  \begin{split}
    \dim \eO_\alpha &= \dim K - \dim K_\tau + \dim \fk_\tau^0 + \dim \eO_{\kappa_\tau}\\
    &= \dim K - \dim K_\tau + (\dim \fk - \dim \fk_\tau) + \dim
    \eO_{\kappa_\tau}\\
    &= 2 \dim \eO_\tau + \dim \eO_{\kappa_\tau},
  \end{split}
\end{equation}
resulting (of course) in the same dimension.

\subsubsection{Coadjoint orbits of the euclidean groups}
\label{sec:coadj-orbits-eucl}

To see how this works in practice and because we shall need these
results below, let us discuss the coadjoint orbits of the euclidean
groups $\ISO(n)$.  We are particularly interested in $n=2,3$ as those
appear as stabilisers $K_\tau$ in our discussion of Carroll orbits.

The euclidean group $\ISO(n)$ is a subgroup of the affine group
$\Aff(n,\RR)$ and hence we may embed it as a subgroup of the general
linear group $\GL(n+1,\RR)$.  We choose the following embedding
\begin{equation}
  \ISO(n) = \left\{
    \begin{pmatrix}
      R & \bv \\ \bzero^T & 1
    \end{pmatrix}
~ \middle | ~  R \in \SO(n), \bv \in \RR^n \right\}.
\end{equation}
We shall use the abbreviated notation $(R,\bv)$ for the generic element
of $\ISO(n)$.  Then the group law in this parametrisation is given by
\begin{equation}
  (R_1, \bv_1) (R_2, \bv_2) = (R_1 R_2, \bv_1 + R_1 \bv_2),
\end{equation}
from where we read off the inverse
\begin{equation}
  (R,\bv)^{-1} = (R^T, - R^T\bv),
\end{equation}
where we have used that $R^{-1} = R^T$ for $R \in \SO(n)$.  Let us
introduce the following abbreviated notation for the Lie algebra
$\iso(n)$:
\begin{equation}
  (X,\bb) =
  \begin{pmatrix}
    X & \bb\\ \bzero^T & 0
  \end{pmatrix}.
\end{equation}
The adjoint representation is easy to work out explicitly and one
finds
\begin{equation}
  \Ad_{(R,\bv)^{-1}} (X,\bb) = (R^T X R, R^T(\bb + X \bv)).
\end{equation}
This allows us to write the coadjoint action.  Write $(J,\k) \in
\iso(n)^*$ with dual pairing
\begin{equation}
  \left<(J,\k), (X,\bb)\right> = \tfrac12 \Tr J^T X + \k^T\bb.
\end{equation}
Then it follows from a simple calculation that
\begin{equation}
  \Ad^*_{(R,\bv)}(J,\k) = (RJR^T + (R\k)^T\bv - \bv (R\k)^T, R\k).
\end{equation}

Let us now specialise to $n=2$.  Since $\SO(2)$ is abelian, we have
now that $RJR^T = J$, so that
\begin{equation}
  \Ad^*_{(R,\bv)}(J,\k) = (J + (R\k)^T\bv - \bv (R\k)^T, R\k).
\end{equation}

We see that the norm $\|\k\|$ is an invariant of the orbit.  If $\k =
\bzero$, the orbit of $(J,\bzero)$ is a point $\{(J,\bzero)\}$.

If $\k \neq\bzero$, we may bring it to the form
$\k =\begin{pmatrix} 0 \\ k\end{pmatrix}$, where $k = \|\k\| > 0$ and
then using $\bv$ we may set $J = 0$.  Therefore the orbit
representative is $(0, (0,k)^T)$ and the stabiliser consists of
matrices of the form
\begin{equation}
  \begin{pmatrix}
    1 & 0 & 0\\
    0 & 1 & w\\
    0 & 0 & 1
  \end{pmatrix}
\end{equation}
which is abelian and hence has point-like coadjoint orbits.  The
$\SO(2)$-orbit of $\k$ is a circle of radius $\|\k\|$ and hence the
$\ISO(2)$-coadjoint orbit of $(0,\k)$ is the cotangent bundle
$T^*S^1_{\|\k\|} \cong S^1_{\|k\|} \times \RR$.

In other words, the $\ISO(2)$-coadjoint orbits are either cylinders of
some positive radius $\|\k\|$, or every point on the line $\k =
\bzero$.

Finally, let us specialise to $n=3$.  Now a generic element of
$\iso(3)^*$ is $(\bj,\k)$ which is sent to $(R\bj + \bv \times R\k,
R\k)$ under the coadjoint action of $(R,\bv)$.  Again $\|\k\|$ is
invariant.

If $\k=\bzero$, $(\bj,\bzero)$ is sent to $(R\bj,\bzero)$,
so that $\|\bj\|$ is invariant.  If $\bj = \bzero$, we have a
point-like orbit $\{(\bzero,\bzero)\}$; otherwise we have the sphere
$S^2_{\|\bj\|}$ of radius $\|\bj\|$.

If $\k \neq 0$, we may rotate $\k = (0,0,k)^T$, where $k =
\|\k\|>0$.  Rotating in such a way that we leave $\k$ fixed, we can
bring $\bj = (0,j_2,j_3)^T$, but then by choosing $\bv$ suitably, we
can bring $\bj = (0,0,j)$ for some $j\in\RR$.  In other words, the
orbit representative is
\begin{equation}
  \left(
    \begin{pmatrix}
      0 \\ 0 \\ j
    \end{pmatrix},
    \begin{pmatrix}
      0 \\ 0 \\ k
    \end{pmatrix}
\right) \qquad\text{with $j\in\RR$ and $k>0$.}
\end{equation}
The stabiliser of this $(\bj,\k)$ is the two-dimensional abelian group
consisting of elements $(R,\bv)$ with $R\k = \k$ and $\bv \times \k =
\bzero$.  Since it is abelian, its coadjoint orbits are points.  In
summary, the coadjoint orbits of $(\bj,\k)$ with $\k\neq \bzero$ are
of the form
\begin{equation}
  \left\{ \begin{pmatrix} 0 \\ 0 \\ j \end{pmatrix}\right\} \times T^*S^2_{\|\k\|}.
\end{equation}

\subsubsection{Structure of the coadjoint orbits of the Carroll group}
\label{sec:struct-coadj-orbits-1}

We now determine the structure of the Carroll coadjoint orbits.  It is
now a simple matter to go through each of the coadjoint orbits and
identify $\kappa$, $\tau$, $K_\tau$ and $\kappa_\tau$ and the
$K_\tau$-coadjoint orbit $\eO_{\kappa_\tau}$ of $\kappa_\tau$.
Table~\ref{tab:structure-orbits} summarises these results, with
explanations following the table.

\begin{table}[h]
  \centering
    \caption{Deconstructing the coadjoint orbits}
    \label{tab:structure-orbits}
  \begin{adjustbox}{max width=\textwidth}
    \begin{tabular}{>{$}l<{$}|*{8}{>{$}c<{$}}}
      \# & \alpha \in \g^* & \tau \in \t^* & \eO_\tau & K_\tau & \kappa \in \fk^* & \kappa_\tau \in \fk_\tau^* & \eO_{\kappa_\tau} & \eO_\alpha\\\toprule
      1& (\bzero,\bzero,\bzero,E_0\neq0) & (\bzero,E_0) & \AA^3_{E=E_0} & \SO(3) & (\bzero,\bzero) & \bzero &  \{\bzero\} & T^*\AA^3 \\
      2& (\bj\neq\bzero,\bzero,\bzero,E_0\neq0) & (\bzero,E_0) & \AA^3_{E=E_0} & \SO(3) & (\bj,\bzero) & \bj &  S^2_{\|j\|} & T^*\AA^3 \times_{\AA^3} (K \times_{K_\tau} S^2) \\
      3& (\bzero,\bzero,\bzero,0) & (\bzero,0) & \{(\bzero,0)\} & K & (\bzero,\bzero) & (\bzero,\bzero) &  \{(\bzero,\bzero)\} & \{(\bzero,\bzero,\bzero,0)\}\\
      4& (\bj\neq\bzero,\bzero,\bzero,0) & (\bzero,0) & \{(\bzero,0)\} & K & (\bj,\bzero) & (\bj,\bzero) &  S^2_{\|\bj\|} & S^2 \\
      5& (\bj,\k\neq\bzero,\bzero,0)_{\bj\times\k=\bzero} & (\bzero,0) & \{(\bzero,0)\} & K & (\bj,\k) & (\bj,\k) & \{\bj\} \times T^*S^2_{\|\k\|} & T^*S^2\\
      6& (\bj,\bzero,\p\neq 0,0)_{\bj\times \p = \bzero} & (\p,0) & S^2_{\|\p\|} & \SO(2) \ltimes \RR^3 & (\bj,\bzero) & (\bj,\bzero) & \{(\bj,\bzero)\} & T^*S^2\\
      7& (\bj,\k\neq\bzero,\p\neq\bzero,0)_{\k \times \p = \bj \times \k = \bzero} & (\p,0) & S^2_{\|\p\|} & \SO(2) \ltimes \RR^3 & (\bj,\k) & (\bj,\k) & \{(\bj,\k)\} & T^*S^2\\
      8& (\bzero,\k,\p,0)_{\k\times\p \neq \bzero} & (\p,0) & S^2_{\|\p\|} & \SO(2) \ltimes \RR^3 & (\bzero,\k) & (\bzero,\k) & T^*S^1_{\|\k\|} & T^*S^2 \times_{S^2} (K \times_{K_\tau} T^*S^1)\\
      \bottomrule
    \end{tabular}
  \end{adjustbox}
  \vspace{1em}
  \caption*{The stabilisers $K_\tau$ in cases \#6,7,8 consists of
    elements $(R,\bv)$ where $\bv\in\RR^3$ is arbitrary and $R\p = \p$.
    Since $\p \neq \bzero$, these are rotations about the axis defined
    by $\p$ and hence isomorphic to $\SO(2)$, so that the stabiliser
    is isomorphic to $\SO(2)\ltimes \RR^3$, but actually $\SO(2)$
    only acts nontrivially on a plane in $\RR^3$, hence the stabiliser
    is more properly written as $\ISO(2) \times \RR$ or, even more
    invariantly, as $\ISO(\p^\perp) \times \RR \p$, as a
    subgroup of $\ISO(3)$.  The $K_\tau$-coadjoint orbits should be
    self-explanatory.  In cases \#3,4,5, the stabiliser is $K \cong
    \ISO(3)$, whose coadjoint orbits were determined in
   Section~\ref{sec:coadj-orbits-eucl} above: in case \#3 we have the
    point-like orbit $\{\bzero,\bzero\}$, in case \#4 we have the
    2-sphere of radius $\|\bj\|$, and in case \#5 we have the
    cotangent bundle of the sphere of radius $\|\k\|$.  In cases
    \#6,7,8, the stabiliser $K_\tau$ is isomorphic to $\ISO(2)\times
    \RR$ and the coadjoint orbits have been determined in
    Section~\ref{sec:coadj-orbits-eucl} above.  In case \#6 we have
    the point-like orbit $\{(\bj,\bzero)\}$ and in case \#8 we have
    the cylinder $T^*S^1$.  Only case \#7 needs some explanation.
    In this case, all of $\bj,\k,\p$ are collinear.  The
    $K_\tau$-coadjoint action of $(R,\bv)$ on $\kappa_\tau = (\bj,\k)$
    gives $(\bj +  \bv \times \k, \k)$, but $\bj + \bv \times \k \in
    \so(3)$ \emph{and} we need to project to $\so(2)$.  This projection is
    along $\bj$ and since $(\bv \times \k)\cdot \bj = 0$, we see that
    the coadjoint orbit is only the point $\{(\bj,\k)\}$.}
\end{table}

\providecommand{\href}[2]{#2}\begingroup\raggedright\endgroup

\end{document}